\documentclass[aps,pre,twocolumn,superscriptaddress,10pt]{revtex4-2}
\usepackage[utf8]{inputenc}
\usepackage{multirow,amsthm,amssymb,amsbsy,amsmath,epsfig,epstopdf,bm,float}
\usepackage{bm}
\usepackage{graphicx}
\usepackage{booktabs}
\usepackage{verbatim}
\usepackage{dcolumn}
\usepackage{color}
\usepackage{xcolor}
\usepackage[normalem]{ulem}
\usepackage{soul}
\setcounter{MaxMatrixCols}{20}
\usepackage{braket}
\usepackage{subfig}
\usepackage[export]{adjustbox}
\usepackage{enumerate}

\newcommand{\bb}[1]{\mathbb{#1}}
\newcommand{\tr}{\text{tr}}
\newcommand{\vertiii}[1]{{\left\vert\kern-0.25ex\left\vert\kern-0.25ex\left\vert #1 
\right\vert\kern-0.25ex\right\vert\kern-0.25ex\right\vert}}

\newtheorem{theorem}{Theorem} 
\newtheorem{definition}[theorem]{Definition}
\newtheorem{lemma}[theorem]{Lemma}
\newtheorem{remark}[theorem]{Remark}
\newtheorem{proposition}[theorem]{Proposition}
\newtheorem{corollary}[theorem]{Corollary}
\newtheorem{example}[theorem]{Example}

\begin{document}
\title{Input-dependence in quantum reservoir computing}
\author{Rodrigo Mart\'inez-Pe\~na}
\email{rodrigo.martinez@dipc.org}
\affiliation{Donostia International Physics Center, Paseo Manuel de Lardizabal 4, E-20018 San Sebastián, Spain}
\author{Juan-Pablo Ortega}
\email{juan-pablo.ortega@ntu.edu.sg}
\affiliation{School of Physical and Mathematical Sciences, Nanyang Technological University, 21 Nanyang Link, Singapore 637371}

\date{ \today }

\begin{abstract}
Quantum reservoir computing is an emergent field in which quantum dynamical systems are exploited for temporal information processing. Previous work found a feature that makes a quantum reservoir valuable: contractive dynamics of the quantum reservoir channel toward input-dependent fixed points. These results are enhanced in this paper by finding conditions that guarantee a crucial aspect of the reservoir's design: distinguishing between different input sequences to ensure a faithful representation of temporal input data. This is implemented by finding a condition that guarantees injectivity in reservoir computing filters, with a special emphasis on the quantum case. We provide several examples and focus on a family of quantum reservoirs that is much used in the literature; it consists of an input-encoding quantum channel followed by a strictly contractive channel that enforces the echo state and the fading memory properties. This work contributes to analyzing valuable quantum reservoirs in terms of their input dependence. 
\end{abstract}	

\keywords{Suggested keywords}

\maketitle

\section{Introduction}

Quantum reservoir computing (QRC) is an active research area in quantum machine learning. When applied to time series processing, QRC uses complex quantum systems to process the input information without tuning its physical parameters or, at most, a reduced number of them. The connection with traditional reservoir computing (RC) is that the reservoir is left untouched during training, which involves exclusively running a simple (regularized) linear regression on a set of output observables. Researchers are exploring a wide variety of quantum dynamical systems to perform this technique, such as photonic platforms \cite{garcia2024squeezing,garcia2025quantum,nokkala2024retrieving,de2024recurrent}, unitaries with dynamical phase transitions \cite{martinez2021dynamical,palacios2024role,xia2022reservoir,ivaki2024quantum,llodra2025quantum} and driven open quantum systems \cite{dudas2023quantum,sannia2024dissipation,sannia2024skin,abbas2024reservoir,zhu2024practical}. This rapid increase in relevance has been compelled by the recent experimental implementation of QRC models in noisy quantum devices  \cite{dasgupta2022characterizing,mlika2023user,suzuki2022natural,kubota2023temporal,yasuda2023quantum,chen2020temporal,kubota2023temporal,molteni2023optimization,pfeffer2022hybrid,ahmed2025optimal,hu2024overcoming,de2024recurrent,miao2024quantumreservoirpy}.

The growth of the QRC field has been accompanied by an increase in the number of theoretical research works: efficient prediction of quantum many-body dynamics \cite{sornsaeng2024quantum}, experimental protocol designs \cite{pfeffer2022hybrid,mujal2023time,franceschetto2024harnessing,kobayashi2024feedback,hu2024overcoming},  
universal approximation properties
\cite{chen2019learning,chen2020temporal,nokkala2021gaussian,sannia2024dissipation,sannia2024skin}, and the design of useful and operational QRC models \cite{martinez2023quantum,kobayashi2024extending,kobayashi2024coherence}. In this paper, we focus on the last question, that is, the design of operational QRC models. In particular, we concentrate on a property of reservoir computing models that has not been much considered so far in the QRC literature: the injectivity of reservoir filters. An injective reservoir filter is one that allows us to distinguish inputs not by looking at the outputs at each time step but by considering the whole input and output sequences. The ability to distinguish input sequences in this sense is a necessary condition for good performance in many classification and regression temporal information processing tasks. An important particular case is the learning of deterministic dynamical systems using reservoirs. As we shall recall later on in more detail, it has been recently shown \cite{grigoryeva2021chaos, RC21, RC26} that a sufficient condition for the learnability of a dynamical system is the existence of an injective generalized synchronization \cite{kocarev1996generalized} between the dynamical system observations available as data and the reservoir that is used to learn that system. We see later on how the injectivity of the generalized synchronization necessarily implies that of the reservoir filter used in the learning task. This makes of filter injectivity an important architecture constraint that has to be kept in mind at the time of QRC systems design.

At this point, it is important to emphasize that filter injectivity is not related to either the (short term) memory \cite{Jaeger:2002, RC15, RC23} or the expressivity \cite{grigoryeva2018universal, grigoryeva2018echo, gonon2019reservoir, RC12} of the model, two different questions that have also been extensively explored in depth in the literature in connection with reservoir design. Filter injectivity only encompasses the set-theoretical one-to-one correspondence between input and output sequences. Note also that this condition is stronger than the usual separability in the Stone-Weierstrass Theorem traditionally used to conclude the universality of various RC families \cite{DynamicalSystemsMaass, maass2, grigoryeva2018universal} as that condition requires to separate each pair of inputs with an arbitrary element of a (potentially large) RC family. In our case, injectivity is a separability property of a given and fixed filter. 

Our analysis will be conducted in the QRC context and will exploit the state-affine system (SAS) representation of QRC systems introduced in Ref.~\cite{martinez2023quantum} to study the injectivity question. Several examples will be provided to complement the discussion.
Particular attention will be paid to a broad family of quantum models obtained as the composition of two general quantum channels: first, a map that feeds the input to the quantum system, and second, another one that ensures some of the basic requirements for performing reservoir computing, that is, the echo state property (ESP) and the fading memory property (FMP). We call these systems {\it contracted-encoded quantum channels}.

The structure of the paper is as follows. Section \ref{sec:definitions} introduces the notation, general framework, and definitions of quantum reservoir systems. At the end of this section, in Theorem \ref{th:1} we generalize Theorem 2 in Ref.~\cite{martinez2023quantum} to any state-space map that is continuous and a contraction in the state-space entry. Sec. \ref{sec:results} contains all the results about injectivity in the paper, including proofs and examples: Lemma \ref{lm:1}, together with \eqref{maximal rank cond}, give sufficient conditions for global filter injectivity; Proposition \ref{local proposition 1} focuses on local filter injectivity around constant output sequences, with Proposition \ref{characterization constant outputs} characterizing the input sequences that lead to these outputs; Propositions \ref{prop:1} and \ref{prop:2}, and Corollary \ref{cor:1} particularize this analysis to the SAS model; Sec. \ref{sec:c-e quatum channels} concentrates on the contracted-encoded quantum channels, with Theorem \ref{th:2} describing situations that yield constant filters, and later extending the injectivity results to this family. Finally, Sec. \ref{sec:discussion} summarizes the main conclusions that can be drawn from the paper.

\section{Notation, definitions, and preliminary discussion}\label{sec:definitions}

We will work under the same assumptions as in Ref.~\cite{martinez2023quantum}, introducing only the main definitions for the sake of completeness. Consider a complex finite-dimensional Hilbert space $\mathcal{H}$. The space of quantum {\it density matrices} $\mathcal{S}(\mathcal{H})$ is a compact convex subset defined by 
\begin{equation}
\mathcal{S}(\mathcal{H})=\{\rho\in\mathcal{B}(\mathcal{H})\ |\ \rho^{\dagger}=\rho,\ \rho\geq 0,\ \tr(\rho)=1\},
\end{equation} 
where $\mathcal{B}(\mathcal{H})$ is the set of all bounded operators and $\rho^\dagger$ is the conjugate transpose. A quantum channel is a linear map $T: \mathcal{B}(\mathcal{H})\rightarrow \mathcal{B}(\mathcal{H})$ that is {\it completely positive and trace-preserving} (CPTP).  Completely positive and trace-preserving maps leave $\mathcal{S}(\mathcal{H})$ invariant and hence induce a restricted map $T: \mathcal{S}(\mathcal{H})\longrightarrow\mathcal{S}(\mathcal{H})$ that we shall denote with the same symbol and use interchangeably. 

It can be shown that any CPTP map $T: \mathcal{S}(\mathcal{H})\longrightarrow\mathcal{S}(\mathcal{H})$ is {\it nonexpansive} in the trace norm, which means that after applying $T$ to two input states $\rho_1,\rho_2 \in \mathcal{S}(\mathcal{H})$, the distance between these density matrices is either contracted or remains equal:
\begin{equation*}
||T(\rho_1)-T(\rho_2)||_{1}\leq||\rho_1-\rho_2||_1, 
\end{equation*}
where $||A||_1:=\tr\sqrt{AA^{\dagger}}$. We say that a quantum channel is {\it strictly contractive} when
\begin{equation}
\label{contraction}
||T(\rho_1)-T(\rho_2)||_{1}\leq r||\rho_1-\rho_2||_1
\end{equation}
for all $\rho_1,\rho_2\in \mathcal{S}(\mathcal{H})$, where $0\leq r<1$. Strictly contractive maps can be constructed by composing a strictly contractive channel with a general CPTP map, a fact that will be exploited later on and that will motivate the study of a particular family of QRC systems.

We now introduce the quantum reservoir computing framework. A QRC system is specified by a state equation and a readout map. The state equation is a family of continuous CPTP maps $T:\mathcal{B}(\mathcal{H})\times \bb{R}^n\rightarrow \mathcal{B}(\mathcal{H})$, with $n\in \bb{N}$ the number of input features, which are taken to be real values (classical inputs).  The readout map $h:\mathcal{B}(\mathcal{H})\rightarrow \bb{R}^m$, with $m\in \bb{N}$, maps operators in $\mathcal{B}(\mathcal{H})$ to the Euclidean space $\bb{R}^m$ and is used to produce the output of the QRC system. 
Inputs are bi-infinite discrete-time sequences of the form $\underline{\textbf{z}}=(\dots,{\bf z}_{-1},{\bf z}_0,{\bf z}_1,\dots)\in (\bb{R}^n)^{\bb{Z}}$, and outputs $\underline{\textbf{y}}\in(\bb{R}^m)^{\bb{Z}}$ have the same structure. A QRC system is hence determined by the state-space transformations:
\begin{equation} \label{eq:QRC1}
\begin{cases}
&A_t=T(A_{t-1},{\bf z}_t),\\
&\textbf{y}_t=h(A_t),
\end{cases}
\end{equation} 
where $t\in\bb{Z}$ denotes the time index. Analogously, one can define the same setting for semi-infinite discrete-time sequences: $(\bb{R}^n)^{\bb{Z}_-}=\{\underline{\textbf{z}}=(\dots,{\bf z}_{-1},{\bf z}_0)\ |\ {\bf z}_i\in \bb{R}^n, i \in \bb{Z}_- \}$ for left-infinite sequences and $(\bb{R}^n)^{\bb{Z}_+}=\{\underline{\textbf{z}}=({\bf z}_0,{\bf z}_1,\dots)\ |\ {\bf z}_i\in \bb{R}^n, i \in \bb{Z}_+ \}$ for right-infinite sequences. Similar definitions apply to $(D_n)^{\bb{Z}}$, $(D_n)^{\bb{Z}_-}$, and $(D_n)^{\bb{Z}_+}$ with elements in the subset $D_n\subset \bb{R}^n$. We can also construct sequence spaces $(\mathcal{B}(\mathcal{H}))^{\bb{Z}} $ for the space of bounded (trace-class) operators:
\begin{equation}
(\mathcal{B}(\mathcal{H}))^{\bb{Z}}= \{\underline{\textbf{A}}=(\dots,A_{-1},A_0,A_1,\dots)\\
\mid A_i\in\mathcal{B}(\mathcal{H}),\ i\in\bb{Z}\}. 
\end{equation}
Analogous definitions for $(\mathcal{B}(\mathcal{H}))^{\bb{Z}_-}$,$(\mathcal{B}(\mathcal{H}))^{\bb{Z}_+}$, $(\mathcal{S}(\mathcal{H}))^{\bb{Z}}$, $(\mathcal{S}(\mathcal{H}))^{\bb{Z}_-}$, and $(\mathcal{S}(\mathcal{H}))^{\bb{Z}_+}$  follow immediately.

The by-design CPTP character of the map $T:\mathcal{B}(\mathcal{H})\times \bb{R}^n\rightarrow \mathcal{B}(\mathcal{H})$ implies that it naturally restricts to a state equation $T:\mathcal{S}(\mathcal{H})\times \bb{R}^n\rightarrow \mathcal{S}(\mathcal{H})$ with density matrices as state space, that we shall use interchangeably in the sequel and denote using the same symbol.

\medskip

\subsection{Echo State Property (ESP)}
Consider the QRC system defined in~\eqref{eq:QRC1} or its analog for the subsets $\mathcal{S}(\mathcal{H})\subset \mathcal{B}(\mathcal{H})$ and $D_n\subset \bb{R}^n$, that is,  $T:\mathcal{S}(\mathcal{H})\times D_n\rightarrow \mathcal{S}(\mathcal{H})$. Given an input sequence $\underline{\textbf{z}} \in (D_n)^{\bb{Z}}$, we say that $\underline{\boldsymbol{\rho}} \in  (\mathcal{S}(\mathcal{H}))^{\bb{Z}} $ is a {\it solution} of~\eqref{eq:QRC1} for the input $\underline{\textbf{z}} $ if the  components of the sequences $\underline{\textbf{z}} $ and $\underline{\boldsymbol{\rho}}$ satisfy the first relation in~\eqref{eq:QRC1} for any $t \in \Bbb Z$.  We say that the QRC system has the {\it echo state property} (ESP) when it has a unique solution for each input $\underline{\textbf{z}} \in (D_n)^{\bb{Z}}$. More explicitly, for each $\underline{\textbf{z}}\in (D_n)^{\bb{Z}}$, there exists a unique sequence $\underline{\boldsymbol{\rho}}\in (\mathcal{S}(\mathcal{H}))^{\bb{Z}}$ such that 
\begin{equation}
\label{eq:QRC2}
\rho_t=T(\rho_{t-1},{\bf z}_t), \ \text{for all} \ t\in\bb{Z}.
\end{equation}

\medskip

\subsection{Filters and functionals} 
Let $\mathcal{S}(\mathcal{H})\subset \mathcal{B}(\mathcal{H})$ be the space of density matrices and let $D_n\subset \bb{R}^n$ be a subset in the input space. A map of the type $U: (D_n)^{\bb{Z}} \rightarrow (\mathcal{S}(\mathcal{H}))^{\bb{Z}}$ is called a {\it filter} associated to the QRC system~\eqref{eq:QRC1} when it satisfies that 
\begin{equation*}
U(\underline{\textbf{z}}) _t=T \left(U(\underline{\textbf{z}})_{t-1}, {\bf z} _t\right),\ \mbox{for all $\underline{\textbf{z}} \in (D_n)^{\bb{Z}}$ and $t \in \Bbb Z $.}
\end{equation*}
Filters induce what we call {\it functionals} $H: (D_n)^{\bb{Z}} \rightarrow \mathcal{S}(\mathcal{H})$ via the relation $H(\underline{\textbf{z}})=U(\underline{\textbf{z}}) _0 $. It is clear that a uniquely determined filter can be associated with a QRC system that satisfies the ESP. The filter maps, in that case, any input sequence to the unique solution of the QRC system associated with it. A filter is called {\it causal} if it only produces outputs that depend on present and past inputs. More formally, causality means that for any two inputs $\underline{\textbf{z}},\underline{\textbf{v}}\in (D_n)^{\bb{Z}}$ that satisfy ${\bf z}_{\tau}=\textbf{v}_{\tau}$ for any $\tau\leq t$, for a given $t\in\bb{Z}$, we have $U(\underline{\textbf{z}})_t=U(\underline{\textbf{v}})_t$. The filter $U$ is called {\it time-invariant} if there is no explicit time dependence on the system that determines it, that is, it commutes with the {\it time delay operator} defined as $\mathcal{T}_{\tau}(\underline{\textbf{z}})_t:={\bf z}_{t-\tau}$. Filters associated to QRC systems of the type~\eqref{eq:QRC1} are always causal and time-invariant (Proposition 2.1 in Ref.~\cite{grigoryeva2018echo}). As noted in previous works \cite{boyd1985fading, grigoryeva2018echo, grigoryeva2018universal}, there is a bijection between causal and time-invariant filters and functionals on $(D_n)^{\bb{Z}_-}$.  Then, we can restrict our work to causal and time-invariant filters with target and domain in spaces of left-infinite sequences. 

\medskip

\subsection{Fading Memory Property (FMP)}
Let $w:\bb{N}\rightarrow (0,1]$ be a decreasing sequence with zero limit and $w_0=1$. The weighted norm $||\cdot||_w$ on $(\bb{R})^{\bb{Z_-}}$ is defined as 
\begin{equation}
||\underline{\textbf{z}}||_w:=\sup_{t\in \bb{Z}_-}\{w_{-t}||{\bf z}_t||\},
\end{equation}
and the space 
\begin{equation}
l^{w}_-(\bb{R}^n)=\{\underline{\textbf{z}}\in (\bb{R}^n)^{\bb{Z_-}}|\ ||\underline{\textbf{z}}||_w<\infty\},
\end{equation}
with weighted norm $||\cdot||_w$ forms a Banach space (see Appendix A.2 in Ref.~\cite{grigoryeva2018echo}).
In the same vein, we can define 
\begin{equation}
\begin{split}
&||\underline{\textbf{A}}||_w:=\sup_{t\in \bb{Z}_-}\{w_{-t}||A_t||\}, \\
&l^{w}_-(\mathcal{B}(\mathcal{H}))=\{\underline{\textbf{A}}\in (\mathcal{B}(\mathcal{H}))^{\bb{Z_-}}|\ ||\underline{\textbf{A}}||_w<\infty\}.
\end{split}
\end{equation}
It can be shown that $l^{w}_-(\mathcal{B}(\mathcal{H}))$ is a Banach space as well. As explained in Ref.~\cite{martinez2023quantum}, $\mathcal{S}(\mathcal{H}) $ is necessarily compact. An important consequence of this fact is that the relative topology induced by the $l^{w}_-(\mathcal{B}(\mathcal{H})) $ on  $(\mathcal{S}(\mathcal{H}))^{\bb{Z}_-} $ coincides with the product topology (see Corollary 2.7 in Ref.~ \cite{grigoryeva2018echo}).

Take now a subset $D_n\subset \bb{R}^n$ such that  $(D_n)^{\bb{Z}_-}\subset l^{w}_-(\bb{R}^n)$ and consider a QRC system $T:\mathcal{S}(\mathcal{H})\times D_n\rightarrow \mathcal{S}(\mathcal{H})$ that has the ESP. We say that $T$ has the {\it fading memory property} (FMP) when the corresponding functional $H:(D_n)^{\bb{Z}_-} \rightarrow\mathcal{S}(\mathcal{H})$ is a continuous map between the metric spaces $((D_n)^{\bb{Z}_-},||\cdot||_w)$ and $((\mathcal{S}(\mathcal{H}))^{\bb{Z}_-},||\cdot||_w)$, for some weighting sequence $w$. If  $D_n$ is compact, once $H$ is continuous for a given weighting sequence $w$, then it is continuous for all weighting sequences (see Ref.~\cite[Theorem 2.6]{grigoryeva2018echo}).

Proposition 3 of Ref.~\cite{martinez2023quantum} contains a necessary and sufficient contractivity condition for the ESP and the FMP to hold in the case of QRC models with compact input spaces. However, one could still find that despite the ESP and the FMP, the filter of the quantum reservoir is trivially input-independent. To make this observation more specific, let us define, for any ${\bf z}\in D_n$, the map $T_{\bf z}:=T(\cdot , {\bf z}):\mathcal{B}(\mathcal{H}) \rightarrow \mathcal{B}(\mathcal{H})$ by fixing the input. The contractivity condition in Proposition 3 in Ref.~\cite{martinez2023quantum} that characterizes the ESP and the FMP amounts to saying that the maps $T_{\bf z}(\cdot)$ are all contractions and hence have a unique fixed point $\rho^*({\bf z})\in D _n $. Denote by $\rho^*: D_n \rightarrow \mathcal{S}(\mathcal{H})$ the map that associates to each input value ${\bf z}\in D_n$ the fixed point $\rho^*({\bf z})\in D _n $ of the corresponding map $T_{\bf z} $. We call $\rho^*: D_n \rightarrow \mathcal{S}(\mathcal{H})$ the {\it fixed point function}. Theorem 2 in Ref.~\cite{martinez2023quantum}, which we reproduce below for the sake of completeness,  shows that a quantum reservoir filter is trivial if and only if the corresponding fixed point function is constant. 

\begin{theorem}\label{th:1}
Let $T: \mathcal{B}(\mathcal{H}) \times D_n\rightarrow \mathcal{B}(\mathcal{H})$ be a QRC system for which there exists an operator norm $\vertiii{\cdot } $ and $\epsilon>0 $ such that 
$
\vertiii{T(\cdot  , {\bf z})|_{\mathcal{B}_0(\mathcal{H})}}<1- \epsilon$,  for all ${\bf z} \in D_n$. Then, $T$ has a constant fixed point  map $\rho^*: D_n \rightarrow \mathcal{S}(\mathcal{H})$ that always takes the value $\rho^*_T $, that is, $T(\rho^*_T,{\bf z})=\rho^*_T$, for all ${\bf z}\in D_n$, if and only if the corresponding filter $U _T  $ is constant, that is, $U_T(\underline{{\bf z}})_t=\rho^\ast_T  \in \mathcal{S} ({\mathcal H})$, for all $\underline{{\bf z}}\in (D_n)^{\mathbb{Z}}$ and $t \in \Bbb Z$.
\end{theorem}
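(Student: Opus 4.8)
The plan is to exploit two consequences of the uniform contractivity hypothesis. Since $\rho_1-\rho_2\in\mathcal{B}_0(\mathcal{H})$ (the subspace of traceless operators) whenever $\rho_1,\rho_2\in\mathcal{S}(\mathcal{H})$, the bound $\vertiii{T(\cdot,{\bf z})|_{\mathcal{B}_0(\mathcal{H})}}<1-\epsilon$ says that each partial map $T_{\bf z}:=T(\cdot,{\bf z}):\mathcal{S}(\mathcal{H})\to\mathcal{S}(\mathcal{H})$ is a contraction with a constant $1-\epsilon$ that is uniform in ${\bf z}\in D_n$, with respect to the distance induced by $\vertiii{\cdot}$ on the compact state space $\mathcal{S}(\mathcal{H})$. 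By the Banach fixed-point theorem each $T_{\bf z}$ then has a unique fixed point $\rho^*({\bf z})$, so the fixed point function $\rho^*:D_n\to\mathcal{S}(\mathcal{H})$ is well defined; moreover, the uniform contractivity together with the continuity of $T$ yields the ESP, so the filter $U_T$ exists and is unique, and for every $\underline{\bf z}\in(D_n)^{\mathbb{Z}}$, every $t\in\mathbb{Z}$, and every $\rho\in\mathcal{S}(\mathcal{H})$ one has the limit characterization $U_T(\underline{\bf z})_t=\lim_{k\to\infty}(T_{{\bf z}_t}\circ T_{{\bf z}_{t-1}}\circ\cdots\circ T_{{\bf z}_{t-k}})(\rho)$, the convergence being guaranteed independently of $\rho$ by the geometric contraction rate. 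These are the content of the generalization of Proposition~3 of Ref.~\cite{martinez2023quantum} to continuous contractions, and I would either recall their short proofs or invoke them directly.

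For the implication ``$U_T$ constant $\Rightarrow$ $\rho^*$ constant'', I would evaluate a constant filter on constant input sequences. If $U_T(\underline{\bf z})_t=\rho^*_T$ for all $\underline{\bf z}$ and $t$, then picking $\underline{\bf z}=(\dots,{\bf z},{\bf z},\dots)$ forces the solution to be the constant sequence $\rho^*_T$; substituting this into the state equation~\eqref{eq:QRC2} gives $\rho^*_T=T(\rho^*_T,{\bf z})$, i.e. $\rho^*_T$ is a fixed point of $T_{\bf z}$, and by the uniqueness above $\rho^*({\bf z})=\rho^*_T$. Since ${\bf z}\in D_n$ was arbitrary, $\rho^*$ is the constant map with value $\rho^*_T$.

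For the converse ``$\rho^*$ constant $\Rightarrow$ $U_T$ constant'', the key observation is that a common fixed point collapses the limit characterization. If $\rho^*({\bf z})=\rho^*_T$ for all ${\bf z}$, then $T_{\bf z}(\rho^*_T)=\rho^*_T$ for every ${\bf z}$, so the constant sequence $\underline{\boldsymbol{\rho}}=(\dots,\rho^*_T,\rho^*_T,\dots)$ satisfies $\rho_t=T(\rho_{t-1},{\bf z}_t)$ for every input $\underline{\bf z}$ and every $t$; by the ESP it is the unique solution, hence $U_T(\underline{\bf z})_t=\rho^*_T$ for all $\underline{\bf z}$ and $t$. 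Equivalently, taking $\rho=\rho^*_T$ in the limit characterization, every finite composition $T_{{\bf z}_t}\circ\cdots\circ T_{{\bf z}_{t-k}}$ fixes $\rho^*_T$, so the limit is $\rho^*_T$.

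I do not expect a serious obstacle: the only point requiring care is deriving the ESP and the limit characterization of the filter from the uniform-contractivity hypothesis, i.e. checking that the generalization from CPTP maps to arbitrary continuous contractions of the state-space entry does not break the fixed-point and convergence arguments. It does not, since those estimates only use the metric contraction and the compactness of $\mathcal{S}(\mathcal{H})$, and on a finite-dimensional space the topology induced by $\vertiii{\cdot}$ coincides with that of the trace norm. Once this is in place, both implications follow immediately from the Banach fixed-point theorem as above.
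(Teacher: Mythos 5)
Your proof is correct and complete. One thing to be aware of: the paper does not itself prove Theorem \ref{th:1}; it reproduces it verbatim from Theorem 2 of Ref.~\cite{martinez2023quantum}, and the only indication it gives of the imported argument is the remark that the $\epsilon$ in the contraction hypothesis makes the filter expressible as the uniformly convergent state-affine series \eqref{eq:filter_x} --- i.e., the original proof appears to run through the Bloch/SAS representation, where the fixed point is $\mathbf{x}^\ast({\bf z})=(I-p({\bf z}))^{-1}q({\bf z})$ and constancy of the fixed point map collapses the series by telescoping. Your route bypasses that representation entirely: you observe that $\rho_1-\rho_2\in\mathcal{B}_0(\mathcal{H})$ for any two states, so by linearity the hypothesis makes each $T_{\bf z}$ a uniform $(1-\epsilon)$-contraction of the compact (hence complete) set $\mathcal{S}(\mathcal{H})$ in the vector norm underlying $\vertiii{\cdot}$, and then both implications follow from the Banach fixed-point theorem together with uniqueness of solutions: a constant filter evaluated on constant inputs pins down each $\rho^\ast({\bf z})$ as $\rho^\ast_T$, and a common fixed point makes the constant sequence the unique solution for every input. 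This is precisely the argument that Remark \ref{remark:1} asserts works for arbitrary continuous contractions of the state entry, so your proof simultaneously establishes the generalization the paper only states; what the SAS route buys instead is the explicit series formula for the filter, which the paper reuses later (e.g., in \eqref{eq:x_0} and Propositions \ref{prop:1} and \ref{prop:2}), but which is not needed for this theorem.
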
 

The symbol $\mathcal{B}_0(\mathcal{H})\subset \mathcal{B}(\mathcal{H})$ denotes the vector subspace made of traceless operators, and the operator norm contraction, i.e. $
\vertiii{T(\cdot  , {\bf z})|_{\mathcal{B}_0(\mathcal{H})}}<1- \epsilon$ for all ${\bf z}\in D_n$, is a necessary and sufficient condition for the ESP and FMP when $D_n$ is compact (Proposition 3 in Ref.~\cite{martinez2023quantum}). The $\epsilon$ in the contraction condition ensures that the filter can be expressed as a uniformly convergent series [see Eq.~\eqref{eq:filter_x}] later on in the text.

\begin{remark}\normalfont\label{remark:1}
Theorem \ref{th:1} can be directly generalized to any state-space map that is continuous and a contraction on the first entry. That is, let $F: V \times D_n\rightarrow V$, assume 
\begin{equation*}
||F({\bf x}^2,{\bf z})-F({\bf x}^1,{\bf z})||\leq c ||{\bf x}^2-{\bf x}^1||,
\end{equation*}
for all ${\bf x}^1,{\bf x}^2\in V \text{ and } {\bf z}\in D_n$, with constant $0<c<1$ for some norm (we assume that $V$ is a closed subset of a Banach space). Let us define the maps $F_{\bf z}:V \rightarrow V$, ${\bf z}\in D_n$, by fixing the input. Then, by the Banach-fixed point theorem, the maps $F_{\bf z}$ have a unique fixed point that can be used to define a fixed point function ${\bf x}^*:D_n\rightarrow V$. This fixed point map is constant and equal to a given value ${\bf x}^*_F$, that is, $F({\bf x}^*_F,{\bf z})={\bf x}^*_F$, for all ${\bf z}\in D_n$, if and only if the corresponding filter $U _F$ is constant, that is, $U_F(\underline{{\bf z}})_t={\bf x}^\ast_F  \in V$, for all $\underline{{\bf z}}\in (D_n)^{\mathbb{Z}}$ and $t \in \Bbb Z $. The proof mimics that of Theorem 2 in Ref.~\cite{martinez2023quantum}.
\end{remark}

\subsection{SAS representation} \label{sec:SAS_definitions}
Theorem \ref{th:1} can be shown either in the density matrix formalism or in the matrix representation.  More specifically, we shall be working with the Bloch vector (or Pauli matrix) representation of quantum finite-dimensional systems \cite{kimura2003bloch,byrd2003characterization,hantzko2024pauli} associated with a given Gell-Mann basis.
We will start by introducing the notation necessary for the matrix representation of quantum channels. Let $\{B_i\}_{i \in \left\{1, \ldots, d ^2\right\}}$ be an orthonormal basis for the vector space $\mathcal{B}(\mathcal{H})$, when endowed with the Hilbert-Schmidt inner product, that is, $\tr(B^{\dagger}_iB_j)=\delta_{ij}$, and $d\in \mathbb{N}$ is the dimension of $\mathcal{H}$. Using any such basis we can represent any operator  $A \in \mathcal{B}(\mathcal{H})$ as $A=\sum_{i=1}^{d^2} a_iB_i$, with $a_i=\tr(B_i^{\dagger}A)$. Analogously, we can express any linear map $T: \mathcal{B}(\mathcal{H})\rightarrow \mathcal{B}(\mathcal{H})$ as
\begin{equation}
\label{eq:Wmatrix}
T(A)
=\sum^{d^2}_{i,j=1}\widehat{T}_{ij}a_jB_i, \ \mbox{where $\widehat{T}_{ij}=\tr(B^{\dagger}_iT(B_j))$.}
\end{equation}
This observation implies that QRC systems $T:\mathcal{S}(\mathcal{H})\times D_n\rightarrow \mathcal{S}(\mathcal{H})$ admit an equivalent representation as a system 
$\widehat{T}: V \times D_n\rightarrow V$, where $V\subset \bb{R}^{d^2}$ is the subset of real Euclidean space that contains the coordinate representations of the elements in $\mathcal{S}(\mathcal{H}) $ using the basis $\{B_i\}_{i \in \left\{1, \ldots, d ^2\right\}}$. 

This statement has been formalized in Ref.~\cite{martinez2023quantum} using the language of system morphisms (see Refs.~\cite{RC15, RC16} for the standard definitions and elementary facts). 

Indeed, consider the QRC system given by the quantum channel $T:\mathcal{S}(\mathcal{H})\times D _n\rightarrow \mathcal{S}(\mathcal{H})$ and the readout map $h:\mathcal{B}(\mathcal{H})\rightarrow \bb{R}^m$, $m\in \bb{N}$. Using the orthonormal basis $\mathcal{B}=\{B_i\}_{i \in \left\{1, \ldots, d ^2\right\}}$ and the discussion above define the map
\begin{equation}
\label{system isom linear}
\begin{array}{cccc}
G_{\mathcal{B}}: &\mathbb{C}^{d^2} &\longrightarrow & \mathcal{B}({\mathcal H})\\
&\mathbf{a} &\longmapsto &\sum_{i=1}^{d^2} a _i B _i.
\end{array}
\end{equation}

Define $V=G_{\mathcal{B}} ^{-1}(\mathcal{S}(\mathcal{H}))\subset \bb{R}^{d^2}$ as well as the map (that we denote with the same symbol) $G_{\mathcal{B}}:V \longrightarrow \mathcal{S}(\mathcal{H}) $  that we obtain by restriction of the domain and codomain in \eqref{system isom linear}. This restricted map is a homeomorphism when $V$ and $\mathcal{S}(\mathcal{H}) $ are endowed with their relative topologies \cite[Theorem 18.2]{munkres2000topology}. With all these ingredients, it is straightforward to verify that the QRC system $(\mathcal{S}(\mathcal{H}), T, h) $ is system isomorphic to $(V, \widehat{T}, \widehat{h}) $ with $\widehat{T}:V \times D_n \longrightarrow V $  and $\widehat{h}: V \longrightarrow \mathbb{R}^m $ given by
\begin{align}
\widehat{T}(\mathbf{a}, {\bf z})&:= G_{\mathcal{B}}^{-1} \left(T \left(G_{\mathcal{B}}(\mathbf{a}), {\bf z}\right)\right), \label{isomorphic state map qrc}\\
\widehat{h}(\mathbf{a})&:= h(G_{\mathcal{B}}(\mathbf{a})),\label{isomorphic readout map qrc}
\end{align}
and that the isomorphism is given by the map $G_{\mathcal{B}}:V \longrightarrow \mathcal{S}(\mathcal{H}) $. The procedure that we just spelled out can be reproduced for any other (orthonormal) basis $\mathcal{B} '$ of $\mathcal{B}({\mathcal H}) $, in which case we would obtain another system $(V', \widehat{T}', \widehat{h}') $ which is obviously isomorphic to both $(V, \widehat{T}, \widehat{h}) $ and $(\mathcal{S}(\mathcal{H}), T, h) $.

As it has been proved in Proposition 2 of Ref.~\cite{martinez2023quantum}, the importance of these statements lies in the fact that system isomorphisms preserve the ESP and the FMP and create a straightforward relation between the corresponding filters. More specifically, the filters $U _T $ and $U_{\widehat{T}}$ (respectively,  $U _T^h$ and  $U_{\widehat{T}}^{\widehat{h}} $) determined by $T$  and $\widehat{T} $ (respectively, by $(T,h)$  and $(\widehat{T}, \widehat{h}) $) satisfy that $U _T={\cal G}_{\mathcal{B}} \circ U_{\widehat{T}} $ (respectively, $U _T^h= U_{\widehat{T}}^{\widehat{h}} $), where ${\cal G}_{\mathcal{B}}=\prod _{\Bbb Z} G _{\mathcal{B}}: (V)^{\mathbb{Z}}\longrightarrow \left({\cal S}( {\mathcal H})\right)^{\mathbb{Z}}$ stands for the product map.  

We now implement these ideas by choosing a generalized Gell-Mann basis \cite{kimura2003bloch,byrd2003characterization,siewert2022orthogonal,hantzko2024pauli} of ${\cal B}({\mathcal H}) $. This is an orthonormal basis made of Hermitian operators in which, by convention, its first element is the normalized identity, namely, $B_1=I/\sqrt{d}$. The remaining $(d^2-1)$ traceless Hermitian operators are the generators 
\begin{equation}
\label{zero trace gens}
\mathcal{B}_0=\left\{\sigma_i\right\}_{i \in \left\{1, \ldots, d^2-1\right\}}
\end{equation}
of the {\it fundamental representation}   of the Lie algebra $\mathfrak{su}(d)$ of SU($d$). The case  $d=2$ corresponds to the case of one qubit, and the Gell-Mann basis is made of the standard Pauli matrices. The orthonormality of the Gell-Mann basis is guaranteed by the product property of the fundamental representation of  $\mathfrak{su} (d) $, namely, $\sigma_a\sigma_b=\delta_{ab}I/(2d)+\sum^{d^2-1}_{c=1}f_c\sigma_c$, where $\sigma_a, \sigma_b $ are two elements of the Gell-Mann basis for $\mathfrak{su} (d) $,  and $f_c$ are complex coefficients. The resulting Gell-Mann basis $\mathcal{B}=\left\{B_i\right\}_{i \in \left\{1, \ldots, d^2\right\}}$  of ${\cal B}({\mathcal H}) $ is hence given by $B _1=I/\sqrt{d}$ and $B _i= \sigma_{i-1} $, $1<i\leq d ^2 $. Note that the subset $\mathcal{B} _0=\left\{B_i\right\}_{i \in \left\{2, \ldots, d^2\right\}}$ is a basis for the vector subspace ${\cal B} _0({\mathcal H}) \subset {\cal B}({\mathcal H}) $ of codimension $1$ made of traceless operators.

We now spell out the matrix expression $\widehat{T} :\mathbb{C}^{d^2} \times D _n \longrightarrow \mathbb{C}^{d^2} $ of a CPTP map $T: \mathcal{B}({\mathcal H})\times D_n\longrightarrow \mathcal{B}({\mathcal H})$ in the basis $\mathcal{B}  $  that we just introduced, by using the prescription introduced in \eqref{eq:Wmatrix}. 
We first note that the choice $B_1=I/\sqrt{d}$ and the trace-preserving character of $T(\cdot , {\bf z})$ for any ${\bf z} \in D_n $, imply that  $\widehat{T}(\cdot , {\bf z})_{11}  =1$. Analogously, $\widehat{T}(\cdot , {\bf z})_{1j}=\tr(T(B_j, {\bf z}))/\sqrt{d}=0$ for $1<j\leq d^2$, since $T$ is trace-preserving. This implies that the matrix $\widehat{T}(\cdot , {\bf z})$ can be written as
\begin{equation}
\label{matrix form t hat}
\widehat{T}(\cdot , {\bf z})=\left(\begin{matrix}
1 & {\bf 0}_{d^2-1}  \\
q({\bf z}) & p({\bf z})  
\end{matrix}\right),
\end{equation}
where $p({\bf z}) $ is the square matrix of dimension $d^2-1 $ with complex entries 
\begin{multline}\label{eq:p_ij}
p({\bf z})_{ij}:=\left(\widehat{T}(\cdot  , {\bf z})|_{G_{\mathcal{B}}^{-1} \left(\mathcal{B}_0(\mathcal{H})\right)}\right)_{ij}\\
=
\left(\widehat{T}(\cdot  , {\bf z})|_{G _{\mathcal{B}} ^{-1}\left(\operatorname{span} \left\{\mathcal{B} _0\right\}\right)}\right)_{ij}
=\tr(B_iT(B_j, {\bf z})),
\end{multline}
$1<i,j\leq d^2$, and $q({\bf z})\in \mathbb{C}^{d^2-1} $ is given by $q({\bf z})_{i}=\tr(B_iT(I, {\bf z}))/\sqrt{d}$, $1<i\leq d^2$. The symbol ${\bf 0}_{d^2-1}$ denotes a zero-row of length $d^2-1$.

Now, given that any element $\rho \in \mathcal{S}({\mathcal H}) $ has coordinates in the Gell-Mann basis of the form $\left(1/\sqrt{d}, \mathbf{x}^{\top} \right)^{\top}\in \mathbb{C}^{d^2}   $ with $\mathbf{x} \in \mathbb{C}^{d^2-1} $ called the {\it Bloch vector}, the matrix form \eqref{matrix form t hat} implies that 
\begin{equation}
\label{matrix form t hat with vector}
\widehat{T}(\cdot , {\bf z})\left(\begin{matrix}
1/\sqrt{d}  \\
\mathbf{x}  
\end{matrix}\right)=
\left(\begin{matrix}
1/\sqrt{d}  \\
p({\bf z})\textbf{x}+q({\bf z})  
\end{matrix}\right).
\end{equation}
This expression implies that  $\widehat{T} :V \times D _n \longrightarrow V $ admits a system-isomorphic representation $\widehat{T}_0 :V_0 \times D _n \longrightarrow V_0 $ on the set $V _0= \left\{\mathbf{x} \in \mathbb{C}^{d^2-1}\mid \left(1/\sqrt{d}, \mathbf{x}^{\top} \right)^{\top}\in  V\right\}\subset \mathbb{C}^{d^2-1}$ given by 
\begin{equation}
\label{eq:x}
\begin{array}{cccc}
\widehat{T} _0: & V _0 \times D_n & \longrightarrow & V _0,\\
&(\mathbf{x}, {\bf z}) & \longmapsto & p({\bf z})\textbf{x}+q({\bf z}),
\end{array}
\end{equation}
with readout $\widehat{h} _0(\mathbf{x})= \widehat{h} \left(\left(1/\sqrt{d}, \mathbf{x}^{\top} \right)^{\top}\right)$. The system isomorphism  is in this case, given by the map 
\begin{equation}
\label{isomor v vo}
\begin{array}{cccc}
i _0: &V _0  &\longrightarrow &V, \\
&\mathbf{x} \in V _0 &\longmapsto &\left(1/\sqrt{d}, \mathbf{x}^{\top} \right)^{\top}. 
\end{array}
\end{equation}
Part (iv) of Proposition 2 in Ref.~\cite{martinez2023quantum} guarantees that the dynamical properties of the system $(\widehat{T}, \widehat{h}, V) $ (and hence those of the QRC system $(T,h, \mathcal{S}({\mathcal H}))$)
are  equivalent to those of $(\widehat{T}_0, \widehat{h}_0, V_0) $.

The importance of this observation is that it links $(T,h, \mathcal{S}({\mathcal H}))$ to the nonhomogeneous state-affine system (SAS) introduced in Refs.~\cite{Sontag1979, DangVanMien1984} and whose universality has been proved in Ref.~\cite{grigoryeva2018universal}. SAS are defined as state equations that have the form spelled out in \eqref{eq:x}. Strictly speaking, the SAS systems studied in the above-cited references impose polynomial or trigonometric dependences of $q$ and $p$ on the inputs, while in our situation is capable of accommodating more general forms.

It has been shown in Ref.~\cite{grigoryeva2018universal} that when such a system has the ESP, the corresponding filter $ U_{\widehat{T}_0}: \left(D_n\right)^{\mathbb{Z}} \rightarrow  \left(V _0\right)^{\mathbb{Z}}$ can be written as
\begin{equation} 
\label{eq:filter_x}
U_{\widehat{T}_0}(\underline{{\bf z}})_t = \sum^{\infty}_{j=0}\left(\prod^{j-1}_{k=0}p({\bf z}_{t-k})\right)q({\bf z}_{t-j}),
\end{equation}
where 
$\prod_{k=0}^{j-1}p({\bf z}_{t-k}):=p({\bf z}_{t}) \cdot p({\bf z}_{t-1}) \cdots p({\bf z}_{t-j+1})$.

\section{Results}\label{sec:results}
\subsection{Injective filters}

The situation presented in Theorem \ref{th:1} describes reservoir design features that should be avoided, as they produce systems with only trivial solutions. More in detail, using the SAS representation in the previous section, it was shown in Theorems 1 in Ref.~\cite{martinez2023quantum} that contractive quantum channels $T: \mathcal{B}(\mathcal{H}) \times D_n\rightarrow \mathcal{B}(\mathcal{H})$ produce the trivial filter, that is, $U_{\widehat{T}_0}(\underline{{\bf z}})_t ={\bf 0}$ for all $\underline{{\bf z}} \in \left(D_n\right)^{\mathbb{Z}} $, if and only if they are {\it unital}, that is they satisfy $T(I,{\bf z})=I$, for all ${\bf z} \in D_n$. Even though it is clear that this feature should be avoided, this is not the end of the design question as we could still have situations where the ESP and the FMP hold and the filter is nontrivially input-dependent for long sequences, but {\it the model does not differentiate between some of those input sequences}. To prevent this from happening, we shall formulate conditions on QRC systems that guarantee that the corresponding filters are {\it injective}, that is,  they map distinct input sequences to distinct output sequences. Equivalently, given a filter $U: (D_n)^{\mathbb{Z}_-} \rightarrow V^{\mathbb{Z}_-} $, we say that  $U$ is {\it injective} when given $\underline{{\bf z}},\underline{{\bf z}}'\in (D_n)^{\mathbb{Z}_-}$  $U(\underline{{\bf z}})= U(\underline{{\bf z}}')$, we necessarily have that $\underline{{\bf z}}=\underline{{\bf z}}'$.

In this section we introduce a condition that ensures filter injectivity and later on in Proposition \ref{prop:1} we formulate a readily verifiable sufficient condition for it to hold in the QRC context.

\begin{definition}
\label{siinvert def}
A reservoir system $F:V\times D_n \rightarrow V$ is called {\it state-input invertible (SI-invertible)} when the maps $F_{\bf x}:D_n\rightarrow V$ given by $F_{\bf x}({\bf z}):=F({\bf x},{\bf z})$ are injective for any ${\bf x}\in V$. 
\end{definition}

Notice that in the presence of SI-invertibility, for any $\textbf{x}\in V$ there exists a set $V_\textbf{x}:=F_\textbf{x}(D_n)$ such that the map $F_\textbf{x}:D_n\rightarrow V_\textbf{x}\subset V$ has an inverse $F^{-1}_\textbf{x}:V_\textbf{x}\rightarrow D_n$. The notion of state-invertibility has been introduced in Ref.~\cite{manjunath2021universal} for the construction of embedding techniques in the forecasting of deterministic dynamical systems. The following lemma establishes the link between SI-invertibility and filter injectivity.

\begin{lemma}\label{lm:1}
If $F:V\times D_n \rightarrow V$ is SI-invertible and it has the ESP, then the filter $U_F:(D_n)^{\mathbb{Z}_-}\rightarrow (V)^{\mathbb{Z}_-}$ is necessarily injective.
\end{lemma}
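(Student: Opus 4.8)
The plan is to argue by induction backward in time on the components of the input sequences, using the ESP to identify the filter with the unique solution and using SI-invertibility to recover each input symbol from two consecutive state values. Suppose $\underline{{\bf z}},\underline{{\bf z}}'\in (D_n)^{\mathbb{Z}_-}$ satisfy $U_F(\underline{{\bf z}})=U_F(\underline{{\bf z}}')$. Write ${\bf x}_t:=U_F(\underline{{\bf z}})_t$ and ${\bf x}'_t:=U_F(\underline{{\bf z}}')_t$, so that by hypothesis ${\bf x}_t={\bf x}'_t$ for all $t\in\mathbb{Z}_-$. Since $U_F$ is the filter associated to $F$, the defining relation of a filter gives ${\bf x}_t=F({\bf x}_{t-1},{\bf z}_t)=F_{{\bf x}_{t-1}}({\bf z}_t)$ and likewise ${\bf x}'_t=F_{{\bf x}'_{t-1}}({\bf z}'_t)$ for every $t$.

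The key step is then immediate from the structural hypothesis: fix any $t\in\mathbb{Z}_-$. Because ${\bf x}_{t-1}={\bf x}'_{t-1}=:{\bf x}$ and ${\bf x}_t={\bf x}'_t$, we have $F_{{\bf x}}({\bf z}_t)=F_{{\bf x}}({\bf z}'_t)$, and since $F$ is SI-invertible the map $F_{\bf x}:D_n\rightarrow V$ is injective, so ${\bf z}_t={\bf z}'_t$. As $t$ was arbitrary, $\underline{{\bf z}}=\underline{{\bf z}}'$, which is exactly injectivity of $U_F$.

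I expect no serious obstacle here; the only points that need care are bookkeeping ones. First, one must be sure that $U_F$ exists and is well defined as a map $(D_n)^{\mathbb{Z}_-}\rightarrow (V)^{\mathbb{Z}_-}$: this is where the ESP is used, together with the bijection between causal time-invariant filters and functionals recalled earlier, so that it is legitimate to restrict attention to left-infinite sequences and to invoke the filter relation ${\bf x}_t=F({\bf x}_{t-1},{\bf z}_t)$ pointwise in $t$. Second, the argument uses SI-invertibility only at the particular states that actually occur as filter values, which is a weaker requirement than Definition 1 demands, so the hypothesis is more than sufficient. No continuity, contractivity, or topological input is needed beyond what the ESP already provides; the statement is essentially a one-line consequence of reading the filter recursion backwards and inverting $F_{\bf x}$ at each step.
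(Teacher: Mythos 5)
Your argument is correct and is essentially the same as the paper's proof: both identify the filter values via the recursion ${\bf x}_t=F_{{\bf x}_{t-1}}({\bf z}_t)$ and invoke injectivity of $F_{{\bf x}_{t-1}}$ at each time step to conclude ${\bf z}_t={\bf z}'_t$. The only cosmetic difference is that the paper phrases it as a contradiction argument while you give the direct version, and your remark that SI-invertibility is only needed on the reachable states matches the paper's own observation following \eqref{maximal rank cond}.
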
 
\begin{proof}
Suppose that $U_F$ is not injective. This means that there exists $\underline{\textbf{z}},\underline{\textbf{z}}'\in (D_n)^{\mathbb{Z}_-}$ such that $\underline{\textbf{z}}\neq \underline{\textbf{z}}'$ but $U_F(\underline{\textbf{z}})= U_F(\underline{\textbf{z}}')= \underline{\textbf{x}}\in (V)^{\mathbb{Z}_-}$. This implies that for any $t\in \mathbb{Z}_- $, $F(\textbf{x}_{t-1},\textbf{z}_t) = F(\textbf{x}_{t-1},\textbf{z}'_t)$ or, equivalently, $F_{\textbf{x}_{t-1}}(\textbf{z}_t)=F_{\textbf{x}_{t-1}}(\textbf{z}'_t)$, which implies that $\textbf{z}_t = \textbf{z}'_t$ for all  $t\in \mathbb{Z}_-$,  and hence
$\underline{\textbf{z}}=\underline{\textbf{z}}'$.
\end{proof}

One way to guarantee SI-invertibility is using the Local Injectivity Theorem \cite[Theorem 2.5.10]{mta} by requiring that the linear maps
\begin{equation}
\label{maximal rank cond}
\mathcal{D}F_\textbf{x}(\textbf{z}):\mathbb{R}^n\rightarrow\mathbb{R}^N \  \mbox{have rank $n$ for all $\textbf{x}\in V$,  $\textbf{z}\in D_n$.} 
\end{equation}
The symbol $\mathcal{D}F_\textbf{x}$ denotes the Fr\'echet differential of $F_\textbf{x}$. Note that the rank condition in \eqref{maximal rank cond} can hold obviously only when $n\leq N $. Notice, additionally, that the injectivity for all $\textbf{x}\in V$ is not needed to guarantee SI-invertibility, and it suffices to require it as a condition exclusively on the subset $V_R\subset V$ of {\it reachable states} of the system defined by
\begin{equation}
\label{reachable set}
V_R := \{{\bf x}\in V \ |\ {\bf x}=U_{F}(\underline{{\bf z}})_0 \text{ for some } \underline{{\bf z}}\in (D_n)^{\bb{Z}_-} \}.
\end{equation}

We now concentrate on how filter injectivity can be guaranteed in the context of the SAS family using these observations. All the results will be formulated for these models in either classical or quantum contexts, while the examples will be entirely quantum-related. When $F(\textbf{x},\textbf{z})=p(\textbf{z})\textbf{x}+q(\textbf{z})$ with $\textbf{z}\in D_n \subset \mathbb{R}^n$ and $\textbf{x}\in V\subset \mathbb{R}^N$, then for any $\textbf{v}\in {\Bbb R}^n$, $\mathcal{D}F_\textbf{x}(\textbf{z})(\textbf{v})=\left(\mathcal{D}p(\textbf{z})(\textbf{v})\right)\textbf{x}+\mathcal{D}q(\textbf{z})(\textbf{v})$ with $\mathcal{D}p(\textbf{z}):\mathbb{R}^n\rightarrow \mathbb{R}^{N\times N}$ and $\mathcal{D}q(\textbf{z}):\mathbb{R}^n\rightarrow \mathbb{R}^N$, the corresponding Fr\'echet derivatives. Consequently, using the condition \eqref{maximal rank cond}, if the linear maps 
$$\left(\mathcal{D}p(\textbf{z})(\cdot)\right)\textbf{x}+\mathcal{D}q(\textbf{z}):\mathbb{R}^n\rightarrow \mathbb{R}^N$$
have rank $n$  for all $\textbf{x}\in V_R$ and $\textbf{z}\in D_n$ then $F$ is necessarily SI-invertible. 
These observations, together with Lemma \ref{lm:1}, allow us to formulate the following statement in the SAS context:

\begin{proposition}\label{prop:1}
Let $F=V\times D_n \rightarrow V$ be a differentiable SAS system given by $F(\mathbf{x}, {\bf z})=p({\bf z}){\bf x}+q({\bf z})$ and let us assume that it has the ESP. Let $V_R\subset V$ be the subset of reachable states of the system. Suppose that for all ${\bf x}\in V_R$ and ${\bf z}\in D_n$ the linear maps 
\begin{equation}
\label{rank condition}
\left(\mathcal{D}p({\bf z})(\cdot)\right){\bf x}+\mathcal{D}q({\bf z}):\mathbb{R}^n\rightarrow \mathbb{R}^N
\end{equation}
have all rank $n$ and hence are injective. Then, the corresponding filter $U_F:(D_n)^{\mathbb{Z}_-}\rightarrow (V)^{\mathbb{Z}_-}$ is injective.
\end{proposition}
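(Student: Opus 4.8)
The plan is to chain together three facts already established above: the affine dependence of $F$ on its state argument, the Local Injectivity Theorem invoked around~\eqref{maximal rank cond}, and Lemma~\ref{lm:1}. The proof itself should be short.

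First I would freeze the state and compute the Fr\'echet differential of $F_{\bf x}\colon D_n\to V$, $F_{\bf x}({\bf z})=p({\bf z}){\bf x}+q({\bf z})$. Since this map is affine in ${\bf x}$, differentiating in ${\bf z}$ with ${\bf x}$ held fixed gives $\mathcal{D}F_{\bf x}({\bf z})({\bf v})=\bigl(\mathcal{D}p({\bf z})({\bf v})\bigr){\bf x}+\mathcal{D}q({\bf z})({\bf v})$ for all ${\bf v}\in\mathbb{R}^n$, which is exactly the linear map appearing in~\eqref{rank condition}. The hypothesis then says precisely that $\mathcal{D}F_{\bf x}({\bf z})$ has rank $n$, hence is injective as a linear map, for every ${\bf x}\in V_R$ and ${\bf z}\in D_n$ (in particular this forces $n\le N$). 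By the Local Injectivity Theorem~\cite[Theorem 2.5.10]{mta} --- the tool already used to pass from~\eqref{maximal rank cond} to SI-invertibility --- each $F_{\bf x}$ with ${\bf x}\in V_R$ is injective, so $F$ is SI-invertible once its state entry is restricted to the reachable set $V_R$.

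Next I would invoke Lemma~\ref{lm:1}, but in the refined form pointed out after~\eqref{reachable set}: its proof only uses injectivity of $F_{\bf x}$ for ${\bf x}$ in $V_R$, because every state $U_F(\underline{{\bf z}})_{t-1}$ that enters the recursion $F_{{\bf x}_{t-1}}({\bf z}_t)=F_{{\bf x}_{t-1}}({\bf z}'_t)$ is, by causality and time-invariance of the QRC filter, of the form $U_F(\underline{{\bf w}})_0$ and therefore lies in $V_R$. Since the ESP guarantees that $U_F\colon(D_n)^{\mathbb{Z}_-}\to V^{\mathbb{Z}_-}$ is well defined, the argument of Lemma~\ref{lm:1} goes through unchanged: $U_F(\underline{{\bf z}})=U_F(\underline{{\bf z}}')$ forces ${\bf z}_t={\bf z}'_t$ for all $t\in\mathbb{Z}_-$, i.e.\ $U_F$ is injective.

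The one genuinely delicate point is the step from ``$\mathcal{D}F_{\bf x}({\bf z})$ injective at every ${\bf z}\in D_n$'' to ``$F_{\bf x}$ injective on all of $D_n$'': the Local Injectivity Theorem is a local statement, so this passage relies on reading the rank hypothesis together with the standing assumptions on $D_n$ (or, as in the discussion preceding~\eqref{maximal rank cond}, simply adopting~\cite[Theorem 2.5.10]{mta} as the certificate of SI-invertibility from the rank condition). Everything else --- the differential computation, identifying it with~\eqref{rank condition}, restricting attention to $V_R$, and quoting Lemma~\ref{lm:1} --- is routine bookkeeping.
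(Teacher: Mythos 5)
Your proposal is correct and follows essentially the same route as the paper: Proposition \ref{prop:1} has no separate proof environment there, being derived from exactly the chain you describe --- the computation $\mathcal{D}F_{\bf x}({\bf z})({\bf v})=\bigl(\mathcal{D}p({\bf z})({\bf v})\bigr){\bf x}+\mathcal{D}q({\bf z})({\bf v})$, the Local Injectivity Theorem~\cite[Theorem 2.5.10]{mta} as the certificate for SI-invertibility (restricted to $V_R$), and then Lemma~\ref{lm:1}. You also correctly identify the one delicate local-to-global step, which the paper handles in the same implicit way you do.
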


\begin{remark}
\normalfont
If the input space $D_n  $ is compact, the ESP hypothesis in the previous statement implies, according to Proposition 3 in Ref.~\cite{martinez2023quantum}, that $p $ is necessarily contractive, that is, there is a matrix norm $\vertiii{\cdot }$ in the space of $N\times  N $ matrices such that $
\vertiii{p({\bf z})}<1- \epsilon$, for all ${\bf z} \in D_n$, and some $\epsilon >0 $, in which case the corresponding filter is given by~\eqref{eq:filter_x}. 
\end{remark}

\begin{remark}
\normalfont
The approach to filter injectivity that has been tackled in Proposition \ref{prop:1} hinges on the notion of state-input invertibility introduced in Definition \ref{siinvert def} and on Lemma \ref{lm:1}. An alternative approach to the same problem can be taken by applying the inverse function theorem to the Fr\'echet differential of the filter \eqref{eq:filter_x} along the lines introduced in Corollary 23 in Ref.~\cite{RC9}. An implementation of that result for SAS systems leads to the same criterion on the linear map \eqref{rank condition} that was formulated in Proposition \ref{prop:1}.
\end{remark}

\begin{remark}\label{remark:8}
\normalfont
{\it  Dynamical systems learnability and filter injectivity.} One of the most visible applications of RC is the learning and forecasting of deterministic dynamical systems \cite{Jaeger04, Ott2018, Pathak:PRL, pathak:chaos}. In that setup, given a reservoir system $F: \mathbb{R}^N \times  \mathbb{R} \times \mathbb{R}^N $ with the ESP and an invertible discrete-time dynamical system $\phi \in {\rm Diff}(M) $ on a manifold $M$, the task consists of forecasting the time series $S_{\omega, \phi}(m):= \left(\omega(\phi ^t(m))\right)_{t \in \Bbb Z}$, $m \in M $, given by the one-dimensional observations of the dynamical system associated with the observation map $\omega:M \rightarrow \mathbb{R}$ and $m\in M$. In that setup, this time series is said to be {\it learnable} with the reservoir system $F$ when being fed to it up to time $t \in \Bbb Z $, its term at time $t + 1  $ can be written (or forecasted) out of the reservoir state at time $t$ by using a deterministic readout $h: \mathbb{R} ^N \rightarrow \mathbb{R}$. More precisely, 
\begin{equation}
\label{forecasting equation}
\omega(\phi ^{t+1}(m))=h \left(U _F\left(S_{\omega, \phi}(m)\right)_t\right),\  \mbox{for all $t \in \Bbb Z $},
\end{equation}
and some readout $h: \mathbb{R} ^N \rightarrow \mathbb{R}$. An important situation in which learnability holds is in the presence of {\it injective generalized synchronizations} \cite{lu:bassett:2020, Verzelli2020b, grigoryeva2021chaos, RC21}. Indeed, consider the map $f:M \rightarrow \mathbb{R}^N  $ given by $f(m):= U _F\left(S_{\omega, \phi}(m)\right)_0 $, $m \in M $. If this map is injective, the forecasting strategy in \eqref{forecasting equation} can be implemented by setting $h:= \omega\circ \phi \circ f ^{-1} $. As it has been shown in Ref.~\cite{RC21}, dynamical systems learning using injective generalized synchronizations is a strict generalization of the embedology techniques \cite{sauer1991embedology} based on the celebrated Takens delay embedding theorem \cite{takensembedding}.

The connection between the filter injectivity condition that we study in this paper and the dynamical systems learning using injective generalized synchronizations lies in the fact that {\it the filter $U _F $ of a reservoir system $F$ that produces an injective generalized synchronization $f $ is necessarily injective} on the image of the map $S_{\omega, \phi}:M \rightarrow \mathbb{R} ^{\mathbb{Z}_{-}} $. Indeed, let $m, m' \in M $ be such that $U_F \left(S_{\omega, \phi}(m)\right)=U_F \left(S_{\omega, \phi}(m')\right) $; this implies in particular that $f(m)=U_F \left(S_{\omega, \phi}(m)\right)_0=U_F \left(S_{\omega, \phi}(m')\right)_0=f(m') $ and since $f$ is by hypothesis injective, we have that $m=m' $, which implies that $S_{\omega, \phi}(m)=S_{\omega, \phi}(m') $, as required.
\end{remark}

\begin{example}
\normalfont
An example in which Proposition \ref{prop:1} necessarily fails are the cases where the rank of the linear maps  $\mathcal{D}F_\textbf{x}(\textbf{z})$ is zero. If this happens then $\mathcal{D}F_\textbf{x}(\textbf{z}) = 0$ for all $\mathbf{x} $  and ${\bf z}  $ and hence the corresponding filter is constant. In QRC models, the necessary and sufficient conditions for having a constant filter are given by Theorem \ref{th:1}, and it is easy to verify that under those conditions, $\mathcal{D}F_\textbf{x}(\textbf{z}) = 0$ necessarily. See, for instance, Examples 2 and 3 of Ref.~\cite{martinez2023quantum} in the Mathematica notebook in Ref.~\cite{github}.
\end{example}

\begin{example}
\normalfont
We now construct a QRC system that satisfies the injectivity condition in \eqref{rank condition} for all $\textbf{z}\in D_n$. Consider the QRC system  $Q:\mathcal{B}(\mathcal{H})\times D_n\rightarrow \mathcal{B}(\mathcal{H})$ and assume that for any ${\bf z}\in D_n $ the maps $Q(\cdot , {\bf z}) $ are {\it ergodic}, that is, they have a unique fixed point $\rho_Q^\ast ({\bf z}) \in \mathcal{S}({\mathcal H})$. In that case, it can be shown (see Corollary 2 in Ref.~\cite{burgarth2013ergodic}) that all the other fixed points $A\in \mathcal{B}(\mathcal{H})$ of $Q(\cdot , {\bf z}) $, that is,  $Q(A, {\bf z})=A$, are of the form $A=\tr(A)\rho_Q^*({\bf z})$. We now use $Q$ to construct another quantum channel $T:\mathcal{B}(\mathcal{H})\times D_n\rightarrow \mathcal{B}(\mathcal{H})$ given by
\begin{equation}
T(A,\textbf{z})=\rho_Q^*(\textbf{z})\tr(A),
\end{equation}
where $\rho_Q^*(\textbf{z})$ is the fixed point of the channel $Q$. It is easy to see using \eqref{eq:p_ij} that the SAS representation of this map is 
\begin{equation}
\hat{T}(\textbf{z})(\cdot):=\left(\begin{matrix}
1 & \boldsymbol{0}_{d^2-1}  \\
q(\textbf{z}) & \boldsymbol{0}_{d^2-1\times d^2-1}  
\end{matrix}\right),
\end{equation}
where $q({\bf z})\in \mathbb{C}^{d^2-1} $ is given by 
\begin{equation*}
q({\bf z})_{i}=\tr(B_iT(I, {\bf z}))/\sqrt{d}= \sqrt{d}\,\tr(B_i\rho_Q^*(\textbf{z})), \  1<i\leq d^2.
\end{equation*}
The associated filter  $U_{\hat{T}}: D_n^ {\mathbb{Z}_{-}}\rightarrow \left(\mathbb{C}^{d^2-1}\right)^ {\mathbb{Z}_{-}} $ is given by $U_{\hat{T}}(\underline{{\bf z}})_t= q({\bf z}_t)$, $t \in \mathbb{Z}_{-}$. According to Proposition \ref{prop:1}, the injectivity of this filter is guaranteed if the linear maps $\mathcal{D} q(\textbf{z}): \mathbb{R} ^n \rightarrow  \mathbb{C}^{d^2-1} $ whose components are given by
\begin{equation}
\mathcal{D}q({\bf z})_{i}= \sqrt{d}\,\tr(B_i\mathcal{D}\rho_Q^*(\textbf{z})),\  1<i\leq d^2,
\end{equation}
are injective for any ${\bf z} \in D _n $. This can be ensured by choosing an appropriate $\rho^*_Q(\textbf{z})$. Note that this example shows how the injectivity of the filter is, in principle, not related to potential good memory properties of the reservoir since the filter $U_{\hat{T}}(\underline{{\bf z}})_t= q({\bf z}_t)$, $t \in \mathbb{Z}_{-}$,  is memoryless as its output only depends on the input that is contemporaneously fed at each time step. 
\end{example}

The rank conditions \eqref{maximal rank cond} or \eqref{rank condition} in Proposition \ref{prop:1} might be difficult to evaluate for all inputs in an arbitrary QRC model. Or it could even happen that the filter is not globally injective but only on the neighborhood of a given input sequence. A local version of this result can be formulated that only requires verifying the condition at one point ${\bf x}_0\in V$, but in exchange, provides only local filter injectivity on a neighborhood of the input sequences that generate this output point. To characterize these input sequences, we will focus on constant output sequences $\underline{\bf x}_0 = (\dots,{\bf x}_0,{\bf x}_0)$. The Banach-fixed point theorem offers a natural way to obtain $\underline{\bf x}_0$ through a constant input sequence when $F$ is a contraction on the first entry. In such a case, $\underline{{\bf x}}_0 = U_F(\underline{\bf z})$ for $\underline{\bf z} = (\dots, {\bf z},{\bf z})$. However, constant input sequences are not necessarily the only route towards constant output sequences. The next proposition accounts for this possibility when studying local filter injectivity.

\begin{proposition}
\label{local proposition 1}
Let $F:V\times D_n \rightarrow V$ be a contractive differentiable reservoir map, let ${\bf x}_0\in V$, and let $\mathbf{x} ^\ast : D _n \rightarrow V$ be the corresponding fixed point map. Suppose now that the linear map $\mathcal{D}_{{\bf z}}F({\bf x}_0,{\bf z}_0):\mathbb{R}^n\rightarrow \mathbb{R}^N$ has maximal rank for any ${\bf z} _0\in \left({\bf x}^* \right)^{-1}({\bf x}_0)$, that is, for all the elements ${\bf z}_0 \in D _n$  such that $\mathbf{x} _0= F(\mathbf{x} _0, {\bf z} _0) $. Then:
\begin{enumerate}[(i)]
    \item There exists a neighborhood $V_0$ of ${\bf x}_0$ and an open neighborhood $D_0 ^{i}$ of ${\bf z} _0 ^i $, for each ${\bf z} _0 ^i \in \left({\bf x}^* \right)^{-1}({\bf x}_0)$, such that $F(V_0\times D_0^{i})\subset V_0$ and the restricted state maps  $F_0: V_0\times D_0^{i}\rightarrow V_0$ are all SI-invertible. If the set $\left({\bf x}^* \right)^{-1}({\bf x}_0)$ has finite cardinality, then $V _0 $ can be chosen to be an open set.
    \item Let $I$ be an index set that labels the elements in  $\left({\bf x}^* \right)^{-1}({\bf x}_0)$, that is, $\left({\bf x}^* \right)^{-1}({\bf x}_0)= \left\{{\bf z} _0 ^i \mid i \in I\right\}$. Let $\underline{{\bf z}} _0= \left({\bf z}_t\right)_{t \in \mathbb{Z}_{-}}$ be a sequence such that ${\bf z}_t \in \left({\bf x}^* \right)^{-1}({\bf x}_0)$ for all $t \in \mathbb{Z}_{-}$, that is, ${\bf z}_t = {\bf z} _0^{i _t} $ for some $i _t\in I $ and all $t \in \mathbb{Z}_{-}$. This choice implies that  $U(\underline{{\bf z}} _0) = \underline{{\bf x}} _0$, where $\underline{{\bf x}} _0= (\ldots, \mathbf{x} _0, \mathbf{x} _0)$ is the constant sequence. Let $A_{\underline{{\bf z}} _0}:=\prod_{t \in \mathbb{Z}_{-}}D_0^{i _t}\subset (D_n)^{\mathbb{Z}_{-}}$ be the product set. Then, the restriction of the filter  $U_F|_{A_{\underline{{\bf z}} _0}}:A_{\underline{{\bf z}} _0}\rightarrow (V_0)^{\bb{Z}_-}$ is injective.
\end{enumerate}
\end{proposition}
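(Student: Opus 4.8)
The plan is to derive both items from a parametrized form of the Local Injectivity Theorem, to use the contraction hypothesis to produce a forward-invariant state neighborhood of ${\bf x}_0$, and then to localize the argument of Lemma~\ref{lm:1}.

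For item (i), I would fix ${\bf z}_0^i\in\left({\bf x}^*\right)^{-1}({\bf x}_0)$ and introduce the auxiliary map $\Phi:V\times D_n\rightarrow V\times V$, $\Phi({\bf x},{\bf z}):=({\bf x},F({\bf x},{\bf z}))$. Its Fr\'echet differential at $({\bf x}_0,{\bf z}_0^i)$ is block-triangular, with the identity in the upper-left corner and $\mathcal{D}_{{\bf z}}F({\bf x}_0,{\bf z}_0^i)$ in the lower-right corner, hence injective precisely because $\mathcal{D}_{{\bf z}}F({\bf x}_0,{\bf z}_0^i)$ has maximal rank. The Local Injectivity Theorem \cite[Theorem 2.5.10]{mta} then yields, after shrinking to a product neighborhood, a set $V_0^i\times D_0^i$ on which $\Phi$ is injective; restricting to a slice $\{{\bf x}\}\times D_0^i$ shows that ${\bf z}\mapsto F({\bf x},{\bf z})$ is injective on $D_0^i$ for every ${\bf x}\in V_0^i$, which is exactly the SI-invertibility I want. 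To obtain a common state neighborhood together with the invariance $F(V_0\times D_0^i)\subset V_0$, I would use that $F({\bf x}_0,{\bf z}_0^i)={\bf x}_0$ and combine the triangle inequality with the contraction in the first entry, namely
\[
\|F({\bf x},{\bf z})-{\bf x}_0\|\;\le\;c\,\|{\bf x}-{\bf x}_0\|+\|F({\bf x}_0,{\bf z})-F({\bf x}_0,{\bf z}_0^i)\|;
\]
then, fixing $\delta>0$ with $B({\bf x}_0,\delta)\cap V\subset V_0^i$ and shrinking $D_0^i$ (by continuity of $F({\bf x}_0,\cdot)$) until the last term is $<(1-c)\delta$ on $D_0^i$, one gets $F\big((B({\bf x}_0,\delta)\cap V)\times D_0^i\big)\subset B({\bf x}_0,\delta)\cap V$. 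Taking $V_0:=B({\bf x}_0,\delta)\cap V$ finishes item (i); when $\left({\bf x}^*\right)^{-1}({\bf x}_0)$ is finite one can choose $\delta$ so small that $B({\bf x}_0,\delta)\cap V\subset\bigcap_i V_0^i$, making $V_0$ relatively open, while in general one takes $V_0=\bigcap_i V_0^i$ (or reduces to the finite case using compactness of $\left({\bf x}^*\right)^{-1}({\bf x}_0)$ when $D_n$ is compact).

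For item (ii), I would first check that the constant sequence $\underline{{\bf x}}_0$ solves the state equation for the input $\underline{{\bf z}}_0$, since $F({\bf x}_0,{\bf z}_0^{i_t})={\bf x}_0$ for every $t$; by the ESP this is the unique solution, so $U_F(\underline{{\bf z}}_0)=\underline{{\bf x}}_0$, and clearly $\underline{{\bf z}}_0\in A_{\underline{{\bf z}}_0}$. Next, for an arbitrary $\underline{{\bf v}}\in A_{\underline{{\bf z}}_0}$ I would use the standard contraction argument — writing $U_F(\underline{{\bf v}})_t$ as the limit of the finite compositions $F(\cdot,{\bf v}_t)\circ\cdots\circ F(\cdot,{\bf v}_{t-k})$ evaluated at ${\bf x}_0\in V_0$ and applying repeatedly that $F(V_0\times D_0^{i_s})\subset V_0$, with the strictness of the displayed estimate controlling the limit — to conclude that $U_F(\underline{{\bf v}})_t\in V_0$ for all $t$, so $U_F$ maps $A_{\underline{{\bf z}}_0}$ into $(V_0)^{\mathbb{Z}_{-}}$. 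Finally, injectivity on $A_{\underline{{\bf z}}_0}$ is the localized version of Lemma~\ref{lm:1}: if $\underline{{\bf v}},\underline{{\bf v}}'\in A_{\underline{{\bf z}}_0}$ satisfy $U_F(\underline{{\bf v}})=U_F(\underline{{\bf v}}')=:\underline{{\bf x}}\in(V_0)^{\mathbb{Z}_{-}}$, then for each $t$ one has $\underline{{\bf x}}_{t-1}\in V_0$, ${\bf v}_t,{\bf v}'_t\in D_0^{i_t}$ and $F(\underline{{\bf x}}_{t-1},{\bf v}_t)=\underline{{\bf x}}_t=F(\underline{{\bf x}}_{t-1},{\bf v}'_t)$, so the SI-invertibility of $F_0:V_0\times D_0^{i_t}\rightarrow V_0$ forces ${\bf v}_t={\bf v}'_t$; since this holds for all $t\in\mathbb{Z}_{-}$, we get $\underline{{\bf v}}=\underline{{\bf v}}'$.

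The hard part will be item (i): turning the pointwise maximal-rank hypothesis on $\mathcal{D}_{{\bf z}}F({\bf x}_0,{\bf z}_0)$ into SI-invertibility of the whole family of slice maps over a single state neighborhood that is simultaneously forward invariant. The parametrized Local Injectivity Theorem (via $\Phi$) disposes of injectivity on small neighborhoods, but the genuinely delicate bookkeeping is coordinating the radius of $V_0$ with the neighborhoods $D_0^i$ so that invariance and SI-invertibility hold at once — and, relatedly, recognizing that a uniformly open $V_0$ can only be guaranteed when $\left({\bf x}^*\right)^{-1}({\bf x}_0)$ is finite (or compact with uniform control on the left inverses of $\mathcal{D}_{{\bf z}}F({\bf x}_0,\cdot)$).
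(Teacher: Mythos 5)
Your proof is correct and reaches the same conclusions, but the key step in part \textit{(i)} is handled by a genuinely different device. The paper applies the Local Injectivity Theorem to the single slice map $F_{{\bf x}_0}$, then invokes lower semicontinuity of the rank to get injectivity neighborhoods $L_{{\bf z}_0}({\bf x})$ for every ${\bf x}$ near ${\bf x}_0$, and finally extracts a \emph{common} ${\bf z}$-neighborhood $K_{{\bf z}_0}\subset\bigcap_{{\bf x}}L_{{\bf z}_0}({\bf x})$ by showing that the inradius function ${\bf x}\mapsto\operatorname{inradius}(L_{{\bf z}_0}({\bf x}),{\bf z}_0)$ attains a strictly positive minimum on a closed ball. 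Your graph map $\Phi({\bf x},{\bf z})=({\bf x},F({\bf x},{\bf z}))$, whose differential is block-triangular with an identity block and the injective block $\mathcal{D}_{{\bf z}}F({\bf x}_0,{\bf z}_0^i)$, packages all of this into one application of the Local Injectivity Theorem: the product neighborhood $V_0^i\times D_0^i$ on which $\Phi$ is injective automatically gives a single $D_0^i$ working uniformly for all ${\bf x}\in V_0^i$, so the inradius minimization disappears. The price is that you need joint $C^1$ regularity of $F$ in $({\bf x},{\bf z})$, but the paper's semicontinuity-of-rank step already uses continuity of $\mathcal{D}_{{\bf z}}F$ in ${\bf x}$, so nothing is really lost. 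The forward-invariance estimate, the intersection $V_0=\bigcap_i V_0^i$ (open only for finite $I$), and the localized version of Lemma \ref{lm:1} in part \textit{(ii)} all match the paper; you are in fact slightly more careful than the paper in part \textit{(ii)}, where you explicitly verify that $U_F$ maps $A_{\underline{{\bf z}}_0}$ into $(V_0)^{\mathbb{Z}_-}$ via the geometric bound $\|U_F(\underline{{\bf v}})_t-{\bf x}_0\|\leq \epsilon/(1-c)<\delta$, a point the paper's proof uses implicitly when it appeals to SI-invertibility at the states ${\bf x}_{t-1}$.
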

\begin{proof}
    \textit{(i)} Let us start by studying the situation in which there exists only one element ${\bf z}_0 \in \left({\bf x}^* \right)^{-1}({\bf x}_0)$. By hypothesis, the linear map $\mathcal{D}_{{\bf z}}F({\bf x}_0,{\bf z}_0):\mathbb{R}^n\rightarrow \mathbb{R}^N$ has maximal rank. Hence, the local injectivity theorem guarantees that there exists an open neighborhood $L_{{\bf z}_0}({\bf x}_0)\in \bb{R}^n$ of ${\bf z}_0$ such that the restriction $F_{{\bf x}_0}|_{L_{{\bf z}_0}({\bf x}_0)}:L_{{\bf z}_0}({\bf x}_0)\rightarrow V$ is injective. Since $F$ is differentiable and the rank is a lower semicontinuous function, there exists an open neighborhood $K_{{\bf x}_0}\subset V$ of ${\bf x}_0$ such that  $\mathcal{D}_{{\bf z}}F({\bf x},{\bf z}_0):\mathbb{R}^n\rightarrow \mathbb{R}^N$ has maximal rank for all ${\bf x}\in K_{{\bf x}_0}$, and then, there exist neighborhoods $L_{{\bf z}_0}({\bf x})$ of ${\bf z}_0$ with ${\bf x}\in K_{{\bf x}_0}$ such that the maps $F_{{\bf x}}|_{L_{{\bf z}_0}({\bf x})}:L_{{\bf z}_0}({\bf x})\rightarrow V$ are injective. We show now that there exists an open neighborhood 
    \begin{equation}\label{eq:intersect}
        K_{{\bf z}_0}\subset \bigcap_{{\bf x}\in K_{{\bf x}_0}}L_{{\bf z}_0}({\bf x})
    \end{equation}
    (eventually after redefining $K_{{\bf x}_0}$)
    such that $F_{{\bf x}}|_{K_{{\bf z}_0}}:K_{{\bf z}_0}\rightarrow V$ is injective for all ${\bf x}\in K_{{\bf x}_0}$. \\
    Let $B_r({\bf x}_0)$ be an open ball of radius $r$ centered at ${\bf x}_0$ such that $\overline{B_r({\bf x}_0)}\subset K_{{\bf x}_0}$. Define now the function $f:\overline{B_r({\bf x}_0)}\rightarrow \bb{R}^+$ by $f({\bf x})=\text{inradius}(L_{{\bf z}_0}({\bf x}),{\bf z}_0)$ where $$\text{inradius}(L_{{\bf z}_0}({\bf x}),{\bf z}_0):=\sup \{S>0 \ | \ B_S({\bf z}_0)\subset L_{{\bf z}_0}({\bf x})\}.$$
    Since $L_{{\bf z}_0}({\bf x})$ is open, $\text{inradius}(L_{{\bf z}_0}({\bf x}),{\bf z}_0)$ is strictly positive for each ${\bf x}\in \overline{B_r({\bf x}_0)}$ and hence the set $f(\overline{B_r({\bf x}_0)})$ is bounded below by zero. We show now that even though $f$ is potentially not continuous, it attains a minimum. First, since $f(\overline{B_r({\bf x}_0)})$ is bounded below it has an infimum $a=\inf_{{\bf x}\in \overline{B_r({\bf x}_0)}}\{f({\bf x})\}$. By the approximation property of the infimum, there exists a sequence $\{{\bf x}_n\}\subset \overline{B_r({\bf x}_0)}$ such that $f({\bf x}_n)\underset{n\rightarrow\infty}{\longrightarrow} a$. Since $\overline{B_r({\bf x}_0)}$ is compact, there exists ${\bf l}\in \overline{B_r({\bf x}_0)}$ and a subsequence $\{{\bf x}_{n_{k}}\}$ of $\{{\bf x}_n\}$ such that ${\bf x}_{n_{k}}\underset{k\rightarrow\infty}{\longrightarrow} {\bf l}$. We show now that $f({\bf l}) = a$. Indeed, suppose that it is not true and that $\exists \delta >0$ such that $f({\bf l}) = a+\delta>a$.  But this contradicts the fact that $f({\bf x}_{n_{k}})\underset{k\rightarrow\infty}{\longrightarrow} a$ which implies that $f({\bf l})=a$ and hence $f$ attains a minimum. We emphasize that since $a=f({\bf l})=\text{inradius}(L_{{\bf z}_0}({\bf l}),{\bf z}_0)>0$, additionally, for any ${\bf x}\in \overline{B_r({\bf x}_0)}$, we have 
    $$B_a({\bf z}_0)\subset B_{f({\bf x})}({\bf z}_0)\subseteq L_{{\bf z}_0}({\bf x}), \ \forall {\bf x}\in \overline{B_r({\bf x}_0)},$$
    and hence $B_a({\bf z}_0)\subset \bigcap_{{\bf x}\in \overline{B_r({\bf x}_0)}}L_{{\bf z}_0}({\bf x})$, which implies that $B_a({\bf z}_0)\subset \bigcap_{{\bf x}\in B_r({\bf x}_0)}L_{{\bf z}_0}({\bf x})$ and hence proves \eqref{eq:intersect} by taking $B_a({\bf z}_0)$ as $K_{{\bf z}_0}$ and $B_r({\bf x}_0)$ as $K_{{\bf x}_0}$. Notice that the restriction $F: K_{{\bf x}_0}\times K_{{\bf z}_0}\rightarrow V$ satisfies that $F_{\bf x}:  K_{{\bf z}_0}\rightarrow V$ is injective for all ${\bf x}\in  K_{{\bf x}_0}$ since $K_{{\bf z}_0}\subset L_{{\bf z}_0}({\bf x})$. \\
    We now construct open neighborhoods $V_0 \in K_{{\bf x}_0}$ of ${\bf x}_0$ and $D_0\subset K_{{\bf z}_0}$ of ${\bf z}_0$ that satisfy \textit{(i)} in the statement of the Proposition. Let $0<\epsilon<r(1-c)$ with $r>0$ as in the construction of the open ball $B_r({\bf x}_0)$ above, and $0<c<1$ the contraction constant of $F$ as in Remark \ref{remark:1}. Since $F$ is continuous, there exists $\delta(\epsilon)>0$ such that if ${\bf z}\in B_{\delta(\epsilon)}({\bf z}_0)$, then 
    \begin{equation}
        ||F_{{\bf x}_0}({\bf z})-F_{{\bf x}_0}({\bf z}_0)||\leq \epsilon.
    \end{equation}
    Choose now $\delta <\delta(\epsilon)$ so that $B_{\delta}({\bf z}_0)\subset K_{{\bf z}_0}$. Take now $V_0:=K_{{\bf x}_0}=B_r({\bf x}_0)$ and $D_0:=B_{\delta}({\bf z}_0)$, and we now show that $F(V_0\times D_0)\subset V_0$. Indeed if $({\bf x},{\bf z})\in V_0\times D_0$, then 
    \begin{equation*}
    \begin{split}
        &||F({\bf x},{\bf z})-{\bf x}_0||=||F({\bf x},{\bf z})-F({\bf x}_0,{\bf z}_0)|| \\
        &\leq ||F({\bf x},{\bf z})-F({\bf x}_0,{\bf z})||+||F({\bf x}_0,{\bf z})-F({\bf x}_0,{\bf z}_0)||\\
        &\leq c||{\bf x}-{\bf x}_0||+||F_{{\bf x}_0}({\bf z})-F_{{\bf x}_0}({\bf z}_0)||\\
        &\leq cr+\epsilon \leq r \Rightarrow F({\bf x},{\bf z})\in B_r({\bf x}_0)=K_{{\bf x}_0}, 
    \end{split}
    \end{equation*}
which proves \textit{(i)} when there exists only one element ${\bf z}_0 \in \left({\bf x}^* \right)^{-1}({\bf x}_0)$. 
We now extend the proof that we just wrote down to the case in which the set $\left({\bf x}^* \right)^{-1}({\bf x}_0)$ has arbitrary cardinality and its elements are labeled by a set $I$. We proceed, for each ${\bf z}^i_0\in \left({\bf x}^* \right)^{-1}({\bf x}_0)$, $i \in I $, by mimicking the previous proof to establish the existence of open neighborhoods $V^i_0$ of ${\bf x}_0$ and $D^i_0$ of $ {\bf z}^i_0$ such that $F(V^i_0\times D^i_0) \subset V^i_0$, for all $i\in I$, and additionally, the restrictions $F:V^i_0\times D^i_0\rightarrow V^i_0$ are all SI-invertible.
Redefine $V_0:=\bigcap_{i\in I}V^i_0$. Then the maps $F:V_0\times D_0^{i}\rightarrow V_0$ are also SI-invertible for all $i \in I $. Notice that $V _0$ is a neighborhood of $\mathbf{x} _0$ that is open when $I$ is finite, which proves the claim.

We now proceed to prove part \textit{(ii)} and show that the restriction $U_F|_{A_{\underline{{\bf z}} _0}}:A_{\underline{{\bf z}} _0}\rightarrow (V_0)^{\bb{Z}_-}$ is injective. Indeed, suppose that $\underline{{\bf z} }, \underline{{\bf z} }' \in A_{\underline{{\bf z}} _0} $ are such that $U _F(\underline{{\bf z} })=U _F(\underline{{\bf z} }')= \underline{\mathbf{x}} $ which implies that $F_{\mathbf{x}_{t-1}}({\bf z} _t)=F_{\mathbf{x}_{t-1}}({\bf z}'_t) $, for all $t \in \mathbb{Z}_{-}$. Since in part {\it (i)} we established that all the maps $F:V _0 \times D_0^{i _t} \rightarrow V _0  $, $t \in \mathbb{Z}_{-}$  are SI-invertible, we necessarily have that ${\bf z} _t= {\bf z} _t' $ and hence $\underline{{\bf z} }= \underline{{\bf z}}' $, as required. 
\end{proof}

The proposition that we just proved shows that the filters associated with contractive reservoir maps are locally injective around all the semi-infinite sequences made out of entries in the set $\left({\bf x}^* \right)^{-1}({\bf x}_0)$.  All these sequences are obviously mapped by the filter  $U_F $ to the constant sequence $\underline{\bf x}_0 $. The next proposition shows that these input sequences actually characterize the preimages $U^{-1}_F(\underline{\bf x}_0)$ of constant output sequences  $\underline{\bf x}_0 $ by the filter $U _F$.

\begin{proposition}
\label{characterization constant outputs}
Let $F:V\times D_n \rightarrow V$ be a contractive reservoir map, let $U _F: \left(D_n\right)^{\mathbb{Z}_{-}} \rightarrow V^{\mathbb{Z}_{-}}$ be the corresponding filter, and let $\mathbf{x}^\ast : D_n \rightarrow V $ be the fixed point map.   
\begin{enumerate}[(i)]
\item For any $\mathbf{x}_0 \in V $, we have that
$$U^{-1}_F(\underline{\bf x}_0)= {\rm Seq}\left(\left({\bf x}^* \right)^{-1}({\bf x}_0) \right),$$
where for any subset $S\subset D_n$, ${\rm Seq}(S) \subset D_n ^{\mathbb{Z}_{-}}$ denotes the set of all the sequences in $D_n ^{\mathbb{Z}_{-}}$ that can be constructed by using elements in $S$ (eventually with repetition).
\item If the fixed point map $\mathbf{x}^\ast : D_n \rightarrow V $ is injective then the preimage $U^{-1}_F(\underline{\bf x}_0) $, $\mathbf{x}_0 \in V $, if it is not empty, it contains a unique constant sequence $\underline{\bf z}_0 = (\dots,{\bf z}_0,{\bf z}_0)$, where ${\bf z}_0 \in D_n $ is the unique element such that $\mathbf{x}^\ast ({\bf z} _0)= \mathbf{x} _0 $.
\end{enumerate}
\end{proposition}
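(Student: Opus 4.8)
\textbf{Proof proposal for Proposition \ref{characterization constant outputs}.}

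The plan is to prove part (i) by a double inclusion, and then obtain part (ii) as an easy consequence. For the inclusion ${\rm Seq}\left(({\bf x}^*)^{-1}({\bf x}_0)\right) \subset U_F^{-1}(\underline{\bf x}_0)$, I would take any sequence $\underline{\bf z} = (\dots, {\bf z}_{-1}, {\bf z}_0)$ all of whose entries lie in $({\bf x}^*)^{-1}({\bf x}_0)$, i.e.\ satisfy $F({\bf x}_0, {\bf z}_t) = {\bf x}_0$, and verify directly that the constant sequence $\underline{\bf x}_0 = (\dots, {\bf x}_0, {\bf x}_0)$ is a solution of the state equation for input $\underline{\bf z}$: indeed $F({\bf x}_0, {\bf z}_t) = {\bf x}_0$ for every $t$ by assumption. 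Since $F$ is contractive it has the ESP (Banach fixed point, as in Remark \ref{remark:1}), so this solution is unique and $U_F(\underline{\bf z}) = \underline{\bf x}_0$, giving the inclusion.

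For the reverse inclusion $U_F^{-1}(\underline{\bf x}_0) \subset {\rm Seq}\left(({\bf x}^*)^{-1}({\bf x}_0)\right)$, I would take $\underline{\bf z} = (\dots, {\bf z}_{-1}, {\bf z}_0) \in (D_n)^{\mathbb{Z}_-}$ with $U_F(\underline{\bf z}) = \underline{\bf x}_0$. By the defining relation of the filter, $U_F(\underline{\bf z})_t = F\bigl(U_F(\underline{\bf z})_{t-1}, {\bf z}_t\bigr)$ for all $t \in \mathbb{Z}_-$; since the left-hand side and the state at time $t-1$ are both ${\bf x}_0$, this reads ${\bf x}_0 = F({\bf x}_0, {\bf z}_t)$, i.e.\ ${\bf z}_t \in ({\bf x}^*)^{-1}({\bf x}_0)$ for every $t$. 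Hence $\underline{\bf z}$ is a sequence built from elements of $({\bf x}^*)^{-1}({\bf x}_0)$, which is exactly membership in ${\rm Seq}\left(({\bf x}^*)^{-1}({\bf x}_0)\right)$. This completes part (i).

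For part (ii), assume $\mathbf{x}^*$ is injective and that $U_F^{-1}(\underline{\bf x}_0) \neq \emptyset$. By part (i) the fibre $({\bf x}^*)^{-1}({\bf x}_0)$ is then nonempty; injectivity of $\mathbf{x}^*$ forces it to be a singleton $\{{\bf z}_0\}$, so by part (i) again $U_F^{-1}(\underline{\bf x}_0) = {\rm Seq}(\{{\bf z}_0\})$, which consists of the single constant sequence $\underline{\bf z}_0 = (\dots, {\bf z}_0, {\bf z}_0)$. Here I should note that $\mathbf{x}^*({\bf z}_0) = {\bf x}_0$ means precisely $F({\bf x}_0, {\bf z}_0) = {\bf x}_0$ since ${\bf x}_0$ is by hypothesis a fixed point of $F_{{\bf z}_0}$ and the fixed point is unique by contractivity; this reconciles the fibre-of-$\mathbf{x}^*$ language with the fixed-point-equation language used in Proposition \ref{local proposition 1}. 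The argument is essentially bookkeeping once part (i) is in hand; the only point requiring a little care is making sure the ESP is invoked correctly so that ``being a solution'' upgrades to ``being the value of the filter,'' and that the notation $({\bf x}^*)^{-1}({\bf x}_0)$ is consistently read as $\{{\bf z} \in D_n \mid F({\bf x}_0, {\bf z}) = {\bf x}_0\}$ rather than as the graph-fibre of a partially defined map. I expect no substantial obstacle here.
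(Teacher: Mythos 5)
Your proposal is correct and follows essentially the same route as the paper: a double inclusion for part (i), using the ESP (from contractivity) to upgrade the constant-sequence solution to the filter value in one direction and the filter's defining relation in the other, and then deducing part (ii) from the fact that injectivity of $\mathbf{x}^\ast$ makes the fibre a singleton. Your extra remarks reconciling the fibre-of-$\mathbf{x}^\ast$ reading with the fixed-point equation are harmless elaborations of what the paper leaves implicit.
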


\begin{proof}
\textit{(i)} If $\underline{\bf z}\in {\rm Seq}\left(\left({\bf x}^* \right)^{-1}({\bf x}_0) \right) $, then we have that $F(\mathbf{x} _0, {\bf z} _t)= \mathbf{x} _0$, for any $t \in \mathbb{Z}_{-} $, where ${\bf z}_t \in \left({\bf x}^* \right)^{-1}({\bf x}_0)$ are the components of the sequence $\underline{\bf z} $. This implies that the constant sequence $\underline{\bf x}_0$ is a solution of the state equation for the input $\underline{\bf z}$. Since $F$ is contractive, it has the ESP and hence $\underline{\bf x}_0=U _F(\underline{\bf z})$ necessarily, which proves the inclusion ${\rm Seq}\left(\left({\bf x}^* \right)^{-1}({\bf x}_0) \right) \subset U^{-1}_F(\underline{\bf x}_0)$. Conversely, let $\underline{\bf z} \in U^{-1}_F(\underline{\bf x}_0)$. This implies that $F(\mathbf{x} _0, {\bf z} _t)= \mathbf{x} _0$, for any $t \in \mathbb{Z}_{-} $, which by definition guarantees that ${\bf z}_t \in \left({\bf x}^* \right)^{-1}({\bf x}_0) $ for any $t \in \mathbb{Z}_{-}$ and hence that  $\underline{\bf z} \in {\rm Seq}\left(\left({\bf x}^* \right)^{-1}({\bf x}_0) \right) $. \textit{(ii)} follows from the fact that when $\mathbf{x}^\ast   $ is injective then the set $\left({\bf x}^* \right)^{-1}({\bf x}_0) $ is a singleton.
\end{proof}

\begin{remark}\normalfont
Proposition \ref{characterization constant outputs} connects with a new extension of the ESP that has been recently proposed, called nonstationary ESP \cite{kobayashi2024extending,kobayashi2024coherence}. This tool is introduced to determine whether a driven dynamical system with a time-varying accessible state space (i.e. nonstationary) can be useful for time-series processing, with a particular focus on QRC models. In fact, a connection can be established between the nonstationary ESP and our results. Following the proof of Theorem III.4 in Ref.~\cite{kobayashi2024extending}, it can be shown that having  
$\vertiii{p({\bf z})}<1- \epsilon$, for all ${\bf z} \in D_n$, and some $\epsilon >0 $,  and injectivity in the fixed-point function [Proposition \ref{characterization constant outputs} \textit{(ii)}] is sufficient to guarantee the nonstationary ESP.
\end{remark}

\begin{remark}
\normalfont
We conclude this discussion about constant inputs and outputs by emphasizing that Proposition \ref{characterization constant outputs} shows that it is not only constant inputs that may be mapped to constant outputs. Additionally, the fixed-point functions of contractive reservoir systems are, in general, noninjective, which leads to several constant input sequences producing the same constant output sequence. However, even in that situation, Proposition \ref{local proposition 1} shows that the reservoir filter could still be locally injective in a neighborhood of each of these input sequences.  Example \ref{ex:period} below is a case where periodicity in the input encoding breaks global filter injectivity, and despite having a noninjective fixed point map, local injectivity can still be shown in a neighborhood of each of the sequences in ${\rm Seq}\left(\left({\bf x}^* \right)^{-1}({\bf x}_0) \right) $. 
\end{remark}

The proposition below formulates the statement in Proposition \ref{local proposition 1} for the particular case of SAS reservoir maps.

\begin{proposition}
\label{prop:2} 
Let $F:V\times D_n \rightarrow V$ be a differentiable SAS system given by $F(\mathbf{x}, {\bf z})=p({\bf z}){\bf x}+q({\bf z})$ and let us assume that $\vertiii{p({\bf z})}<1- \epsilon$, for all ${\bf z} \in D_n$, and some $\epsilon >0 $, which guarantees that $F$ has the ESP. Let $\mathbf{x}_0 \in V$ and assume that for any ${\bf z}_0\in \left(\mathbf{x} ^\ast\right)^{-1}(\mathbf{x} _0) $ the linear maps 
\begin{equation}
\label{rank condition local}
\left(\mathcal{D}p({\bf z}_0)(\cdot)\right){\bf x}_0+\mathcal{D}q({\bf z}_0):\mathbb{R}^n\rightarrow \mathbb{R}^N
\end{equation}
have all rank $n$ and hence are injective. Then, there exist a neighborhood $V_0$ of ${\bf x}_0$ and open neighborhoods $D_0^{i}$ around all points  $\mathbf{z} _0^{i} \in \left(\mathbf{x} ^\ast\right)^{-1}(\mathbf{x} _0) \subset D _n $, $i \in I $, ($I$ is and index set for the elements in $\left(\mathbf{x} ^\ast\right)^{-1}(\mathbf{x} _0) $) with the following property:  let $\underline{{\bf z}} _0= \left({\bf z}_t\right)_{t \in \mathbb{Z}_{-}}$ be a sequence such that ${\bf z}_t \in \left({\bf x}^* \right)^{-1}({\bf x}_0)$ for all $t \in \mathbb{Z}_{-}$, that is, ${\bf z}_t = {\bf z} _0^{i _t} $ for some $i _t\in I $ and all $t \in \mathbb{Z}_{-}$. Let $A_{\underline{{\bf z}} _0}:=\prod_{t \in \mathbb{Z}_{-}}D_0^{i _t}\subset (D_n)^{\mathbb{Z}_{-}}$ be the product set. Then, the restriction of the filter  $U_F|_{A_{\underline{{\bf z}} _0}}:A_{\underline{{\bf z}} _0}\rightarrow (V_0)^{\bb{Z}_-}$ is injective.
\end{proposition}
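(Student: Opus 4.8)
The plan is to obtain Proposition \ref{prop:2} as a direct specialization of Proposition \ref{local proposition 1}: since a SAS map $F(\mathbf{x},{\bf z})=p({\bf z}){\bf x}+q({\bf z})$ with $\vertiii{p({\bf z})}<1-\epsilon$ is a contractive differentiable reservoir map whose input-direction differential is exactly the linear map \eqref{rank condition local}, all that remains is to check the hypotheses of Proposition \ref{local proposition 1} at $\mathbf{x}_0$ and read off its conclusion. So the proof is essentially a translation, with one point worth a line of justification.

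First I would record the differential structure. Because $p$ and $q$ are differentiable, $F$ is differentiable, and fixing the first argument gives the partial Fr\'echet differential in the input direction
\[
\mathcal{D}_{\bf z}F(\mathbf{x},{\bf z})(\mathbf{v})=\left(\mathcal{D}p({\bf z})(\mathbf{v})\right)\mathbf{x}+\mathcal{D}q({\bf z})(\mathbf{v}),\qquad \mathbf{v}\in\mathbb{R}^n,
\]
which, evaluated at $(\mathbf{x}_0,{\bf z}_0)$, is word for word the linear map \eqref{rank condition local}. Hence the hypothesis that \eqref{rank condition local} has rank $n$ for every ${\bf z}_0\in\left(\mathbf{x}^\ast\right)^{-1}(\mathbf{x}_0)$ is precisely the maximal-rank hypothesis ``$\mathcal{D}_{\bf z}F(\mathbf{x}_0,{\bf z}_0)$ has maximal rank for all ${\bf z}_0\in\left(\mathbf{x}^\ast\right)^{-1}(\mathbf{x}_0)$'' of Proposition \ref{local proposition 1} applied to $F$ at $\mathbf{x}_0$.

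Next I would verify that $F$ is a contractive reservoir map in the sense of Remark \ref{remark:1}. Let $\|\cdot\|$ be a vector norm on $\mathbb{R}^N$ compatible with the submultiplicative matrix norm $\vertiii{\cdot}$, i.e. $\|A\mathbf{u}\|\le\vertiii{A}\,\|\mathbf{u}\|$ for all $A$ and $\mathbf{u}$ --- such a norm always exists (one may take $\|\mathbf{u}\|:=\vertiii{\mathbf{u}\mathbf{y}^\top}$ for a fixed nonzero $\mathbf{y}$, or simply use $\vertiii{\cdot}$ itself if it is an operator norm). Then for all $\mathbf{x}^1,\mathbf{x}^2\in V$ and ${\bf z}\in D_n$,
\[
\|F(\mathbf{x}^2,{\bf z})-F(\mathbf{x}^1,{\bf z})\|=\|p({\bf z})(\mathbf{x}^2-\mathbf{x}^1)\|\le\vertiii{p({\bf z})}\,\|\mathbf{x}^2-\mathbf{x}^1\|<(1-\epsilon)\|\mathbf{x}^2-\mathbf{x}^1\|,
\]
so $F$ is a contraction on the first entry with constant $c=1-\epsilon<1$. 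By the Banach fixed point theorem each $F_{\bf z}$ then has a unique fixed point $\mathbf{x}^\ast({\bf z})$ (explicitly $(I-p({\bf z}))^{-1}q({\bf z})$, which is well defined since $\vertiii{p({\bf z})}<1$), $F$ has the ESP, and the filter $U_F$ and the fixed-point map $\mathbf{x}^\ast$ appearing in the statement are well defined.

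With these two observations Proposition \ref{local proposition 1} applies verbatim to $F$ at $\mathbf{x}_0$: part (i) produces the neighborhood $V_0$ of $\mathbf{x}_0$ and the open neighborhoods $D_0^i$ of the points ${\bf z}_0^i\in\left(\mathbf{x}^\ast\right)^{-1}(\mathbf{x}_0)$ with $F(V_0\times D_0^i)\subset V_0$ and all restrictions $F:V_0\times D_0^i\to V_0$ SI-invertible, and part (ii) gives, for any sequence $\underline{{\bf z}}_0=({\bf z}_t)_{t\in\mathbb{Z}_-}$ with ${\bf z}_t={\bf z}_0^{i_t}$, that $U(\underline{{\bf z}}_0)=\underline{\mathbf{x}}_0$ and that $U_F|_{A_{\underline{{\bf z}}_0}}:A_{\underline{{\bf z}}_0}\to(V_0)^{\mathbb{Z}_-}$ with $A_{\underline{{\bf z}}_0}=\prod_{t\in\mathbb{Z}_-}D_0^{i_t}$ is injective --- which is exactly the assertion of Proposition \ref{prop:2}. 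I do not expect a genuine obstacle here; the single point I would not leave implicit is the norm-compatibility remark in the second step, namely passing from the matrix norm $\vertiii{\cdot}$ in the hypothesis to a vector norm on $\mathbb{R}^N$ for which $F$ is a bona fide contraction on the first entry, so that Remark \ref{remark:1} and Proposition \ref{local proposition 1} are literally applicable.
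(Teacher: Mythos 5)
Your proposal is correct and follows exactly the route the paper intends: the paper gives no separate proof of Proposition \ref{prop:2}, presenting it as a direct particularization of Proposition \ref{local proposition 1} to SAS maps, which is precisely what you carry out. The two verifications you make explicit --- that $\mathcal{D}_{\bf z}F(\mathbf{x}_0,{\bf z}_0)$ coincides with the linear map \eqref{rank condition local} and that $\vertiii{p({\bf z})}<1-\epsilon$ makes $F$ a contraction in its first entry in a compatible vector norm --- are exactly the routine checks the paper leaves implicit.
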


SAS systems are an example in which the fixed point map $\mathbf{x}^\ast  $ can be explicitly written down. Indeed, in that case, if $p$ is contractive, then the corresponding filter is given by \eqref{eq:filter_x}. The image by that filter of the constant sequence $\underline{\bf z}=(\dots,{\bf z},{\bf z})$ implies that 
\begin{equation}\label{eq:x_0}
\mathbf{x}^\ast({\bf z}) = \sum^\infty_{j=0}\left(\prod^{j-1}_{k=0}p({\bf z})\right)q({\bf z})=(I-p({\bf z}))^{-1}q({\bf z}). 
\end{equation}
If $p$ is contractive and hence expression \eqref{eq:x_0} holds true, then Proposition \ref{prop:2} yields a local injectivity result on the neighborhoods of constant input sequences whose hypotheses are particularly simple to verify.
\begin{corollary}\label{cor:1}
In the setup of Proposition \ref{prop:2}, let ${\bf z}\in D_n$ and $\underline{{\bf z}}=(\dots,{\bf z},{\bf z})$ be the corresponding constant sequence. Suppose that the linear map $$\left(\mathcal{D}p({\bf z})(\cdot)\right){\bf x} ^\ast ({\bf z})+\mathcal{D}q({\bf z}):\mathbb{R}^n\rightarrow \mathbb{R}^N$$ has rank $n$. Then, there is an open neighborhood $A_{\underline{{\bf z}}}$ of $\underline{{\bf z}}$ in $(D_n)^{\mathbb{Z}_-}$ such that $$U_F|_{A_{\underline{{\bf z}}}}:A_{\underline{{\bf z}}} \rightarrow (V)^{\mathbb{Z}_-}$$ is injective.
\end{corollary}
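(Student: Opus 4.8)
The plan is to run the proof of Proposition~\ref{local proposition 1}(i) at the single point $\mathbf{z}$ of the preimage set $(\mathbf{x}^\ast)^{-1}(\mathbf{x}^\ast(\mathbf{z}))$ and then close the argument with Lemma~\ref{lm:1}. First I would set $\mathbf{x}_0:=\mathbf{x}^\ast(\mathbf{z})$, which is well defined and, since $p$ is contractive, is given explicitly by \eqref{eq:x_0} as $\mathbf{x}_0=(I-p(\mathbf{z}))^{-1}q(\mathbf{z})$; by construction $F(\mathbf{x}_0,\mathbf{z})=\mathbf{x}_0$, so $\mathbf{z}\in(\mathbf{x}^\ast)^{-1}(\mathbf{x}_0)$. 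The point is then that, for the SAS map $F(\mathbf{x},\mathbf{z})=p(\mathbf{z})\mathbf{x}+q(\mathbf{z})$, the partial differential in the input entry is $\mathcal{D}_{\mathbf{z}}F(\mathbf{x}_0,\mathbf{z})=\left(\mathcal{D}p(\mathbf{z})(\cdot)\right)\mathbf{x}_0+\mathcal{D}q(\mathbf{z})$, which is exactly the linear map assumed to have rank $n$; so the maximal--rank hypothesis needed to run the local construction holds at the single point $\mathbf{z}$ of that preimage set.

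Next I would reproduce the single--element part of the proof of Proposition~\ref{local proposition 1}(i) at $\mathbf{x}_0$ and $\mathbf{z}$ — that argument uses only continuity and differentiability of $F$, the uniform contraction constant $c=1-\epsilon$ coming from $\vertiii{p(\mathbf{z})}<1-\epsilon$, the Local Injectivity Theorem, lower semicontinuity of the rank, and the compactness argument producing a uniform inradius, and nothing about other preimage points. It yields an open ball $V_0=B_r(\mathbf{x}_0)$ and an open neighborhood $D_0$ of $\mathbf{z}$ in $D_n$ with $F(V_0\times D_0)\subset V_0$ and $F_{\mathbf{x}}|_{D_0}:D_0\to V_0$ injective for every $\mathbf{x}\in V_0$, i.e. the restricted reservoir $F:V_0\times D_0\to V_0$ is SI--invertible.

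The step that takes a little care — and the only real, if modest, obstacle — is to legitimately apply Lemma~\ref{lm:1}, because $V_0$ is open and the Banach fixed point argument behind the ESP does not run on an open set. I would handle this as in Proposition~\ref{local proposition 1}: the estimates used there actually give $F(V_0\times D_0)\subset B_{cr+\epsilon}(\mathbf{x}_0)$ with $cr+\epsilon<r$, so one may fix $\rho$ with $\epsilon/(1-c)\le\rho<r$ and obtain $F\big(\overline{B_\rho(\mathbf{x}_0)}\times D_0\big)\subset\overline{B_\rho(\mathbf{x}_0)}$. On the closed, hence complete, ball $\overline{B_\rho(\mathbf{x}_0)}$ the map is a $c$--contraction in the first entry uniformly in the input, so Remark~\ref{remark:1} gives it the ESP; since $\overline{B_\rho(\mathbf{x}_0)}\subset V_0$ it is also SI--invertible, and Lemma~\ref{lm:1} gives that its filter $(D_0)^{\mathbb{Z}_-}\to\big(\overline{B_\rho(\mathbf{x}_0)}\big)^{\mathbb{Z}_-}$ is injective. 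Finally, because the ambient SAS system has the ESP, the solution it assigns to any input in $(D_0)^{\mathbb{Z}_-}$ is unique and coincides with the one produced by the restricted system, so this injective filter is precisely $U_F|_{(D_0)^{\mathbb{Z}_-}}$; taking $A_{\underline{\mathbf{z}}}:=(D_0)^{\mathbb{Z}_-}$ — the product of the open neighborhoods $D_0\ni\mathbf{z}$, a neighborhood of the constant sequence $\underline{\mathbf{z}}$ — completes the proof. I would also note that when the preimage $(\mathbf{x}^\ast)^{-1}(\mathbf{x}_0)$ reduces to $\{\mathbf{z}\}$ this is literally Proposition~\ref{prop:2}, the present route being the one that needs the rank condition only at $\mathbf{z}$.
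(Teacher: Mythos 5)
Your proof is correct, and it is in fact more careful than what the paper offers: the paper states Corollary \ref{cor:1} with no proof at all, treating it as an immediate specialization of Proposition \ref{prop:2} once the explicit fixed-point formula \eqref{eq:x_0} identifies $\mathbf{x}^\ast({\bf z})$. Strictly read, that derivation has a small mismatch that you correctly detect and repair: Proposition \ref{prop:2} assumes the rank condition at \emph{every} point of $\left(\mathbf{x}^\ast\right)^{-1}(\mathbf{x}_0)$, whereas the corollary assumes it only at the single point ${\bf z}$, so one must go back into the single-element part of the proof of Proposition \ref{local proposition 1}(i) and observe, as you do, that the construction of $V_0$ and $D_0$ at $({\bf x}_0,{\bf z})$ uses nothing about other preimage points, and that part (ii) applied to the constant sequence only ever invokes SI-invertibility on $V_0\times D_0$. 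Your additional step of passing to an invariant closed ball $\overline{B_\rho({\bf x}_0)}$ before invoking the Banach fixed point theorem, and then matching the restricted solution with the ambient one via uniqueness, patches a completeness detail that the paper glosses over in Proposition \ref{local proposition 1} as well. The one blemish, inherited from the statement itself rather than introduced by you, is the qualifier ``open'': the product set $(D_0)^{\mathbb{Z}_-}$ with $D_0\subsetneq D_n$ is a product neighborhood in the sense used throughout the paper (cf. $A_{\underline{{\bf z}}_0}$ in Propositions \ref{local proposition 1} and \ref{prop:2}) but is not open, nor even a neighborhood, in the product topology that the weighted norm induces on $(D_n)^{\mathbb{Z}_-}$ for compact $D_n$; if you want to be scrupulous you should state the conclusion for the product set $A_{\underline{{\bf z}}}=(D_0)^{\mathbb{Z}_-}$ as in those propositions.
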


\begin{example}
\normalfont
Let us now illustrate Corollary \ref{cor:1}. The same will be done with the content of Proposition \ref{prop:2} once the family of contracted-encoding channels is introduced later on in Sec. \ref{sec:c-e quatum channels}. 
 Consider the system introduced in Eq.~(62) in Sec. IV of Ref.~\cite{martinez2023quantum} for one-dimensional inputs. The Markovian master equation that governs the dynamics is:
\begin{equation}
\dot{\rho}=-i[H(z_t),\rho]+\gamma L\rho L^{\dagger}-\frac{\gamma}{2}\{L^{\dagger}L,\rho\}, 
\end{equation}
where $H(z_t)=z_t\sigma^x/2$ is the input-dependent Hamiltonian, $L=\sigma^-$ is the jump operator, and $\gamma$ is the decay rate. For each time step, the input $z_t$ is kept constant, and the system is evolved for a time $\Delta \tau$. Integrating the master equation over this time interval, the quantum map can be written as
\begin{equation}
\rho_t = e^{\mathcal{L}(z_t)\Delta \tau}\rho_{t-1},
\end{equation}
where $\mathcal{L}(z_t)$ is the Lindbladian of the master equation. The procedure to obtain the matrix expression $\widehat{T}$ of the quantum channel can be found in Ref.~\cite{martinez2023quantum}. The outcome is
\begin{equation}\label{eq:T_good}
\left( \begin{matrix} 1   \\ 
\braket{\sigma^x}_t \\
\braket{\sigma^y}_t \\
\braket{\sigma^z}_t
\end{matrix} \right) =\left( \begin{matrix} 1 & 0 & 0 & 0  \\ 

0 & \widehat{T}_{22} & 0 & 0 \\
\widehat{T}_{31} & 0 &  \widehat{T}_{33} & \widehat{T}_{34} \\
\widehat{T}_{41} &  0 &  \widehat{T}_{43} & \widehat{T}_{44}
\end{matrix} \right) \left( \begin{matrix} 1   \\ 
\braket{\sigma^x}_{t-1} \\
\braket{\sigma^y}_{t-1} \\
\braket{\sigma^z}_{t-1}
\end{matrix} \right),
\end{equation}
see Eq.~(63) in Ref.~\cite{martinez2023quantum} for the detailed form of the matrix elements. This model is a promising candidate for being a valuable QRC system since it has an input-dependent fixed point, which in SAS representation, is 
\begin{equation}\label{eq:x_master}
\textbf{x}^*(z) =\frac{1}{\sqrt{2}}\left( \begin{matrix} \tr\left(\sigma^x\rho^*(z)\right)   \\ 
\tr\left(\sigma^y\rho^*(z)\right) \\
\tr\left(\sigma^z\rho^*(z)\right)
\end{matrix} \right)=\frac{1}{\sqrt{2}} \left( \begin{matrix} 0   \\ 
-\frac{2z\gamma}{2z^2+\gamma^2} \\
-\frac{\gamma^2}{2z^2+\gamma^2}
\end{matrix} \right).
\end{equation}
We show in Fig.~\ref{Fig:1} that the rank condition of Corollary \ref{cor:1} is met for almost all values of $\gamma$ and $z$. Here, the rank condition is studied for a unidimensional input, so the maximum rank can be only $n=1$, what is translated in requiring a nonzero vector norm. Notice also that \eqref{eq:x_master} is injective when $\gamma^2 \neq 16 z^2$.
\begin{figure}[h]
\captionsetup[subfigure]{}
\begin{center}
\includegraphics[scale=0.8]{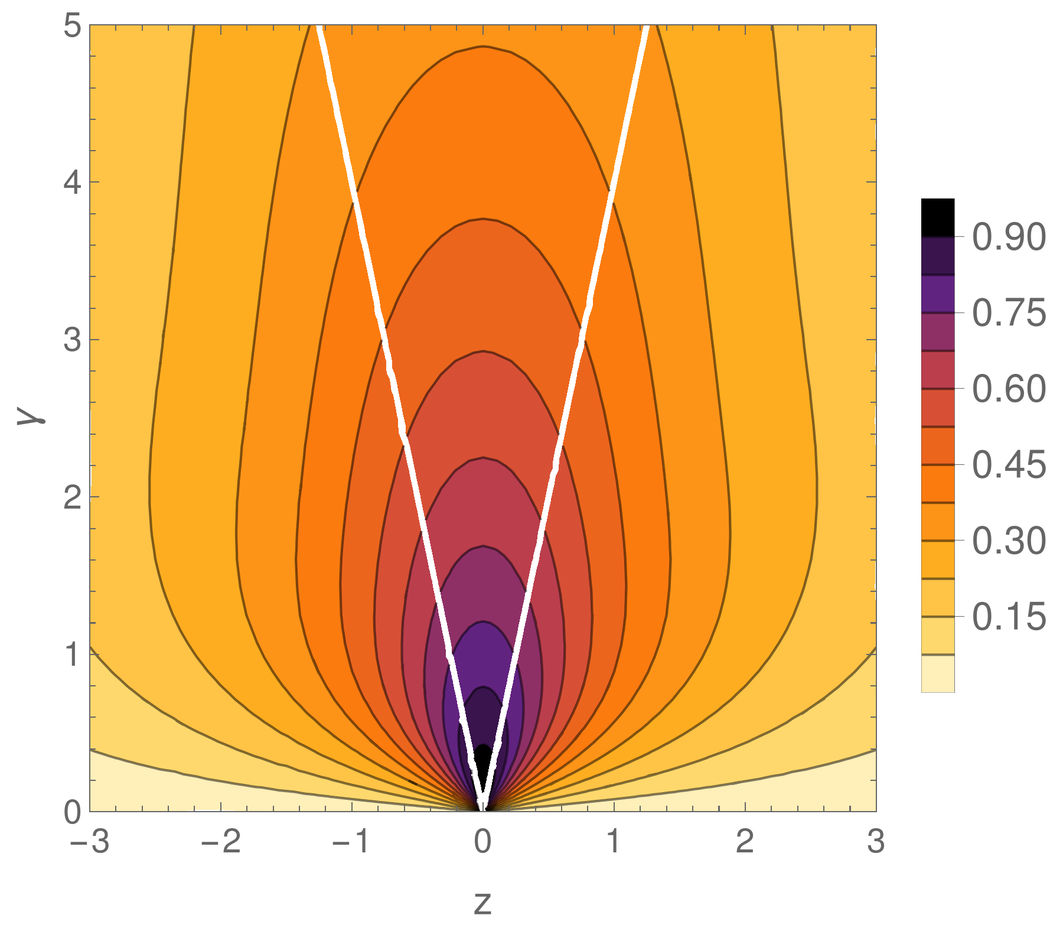}
\caption{Vector norm for the rank condition with $\Delta \tau = 1$. The white lines correspond to the singularity $\gamma^2 = 16 z^2$.}\label{Fig:1}
\end{center}
\end{figure}
\end{example}

\begin{remark}
\normalfont
The functional form of the input encoding is crucial in guaranteeing the rank condition in \eqref{rank condition}. For instance, just introducing a quadratic input encoding in the previous example as $H(z_t)=z^2_t\sigma^x/2$, reduces the rank in Corollary \ref{cor:1} to zero at $z=0$ regardless the values of $\gamma$ and $\Delta \tau$.
\end{remark}

\subsection{Contracted-encoding quantum channels}
\label{sec:c-e quatum channels}

We now focus our input-dependence analysis on an important family of QRC models that we call {\it contracted-encoding quantum channels}. Let $T: \mathcal{S}(\mathcal{H}) \times D_n\rightarrow \mathcal{S}(\mathcal{H})$ be the state-space transformation defined by
\begin{equation}\label{eq:rho}
\rho_t=T(\rho _t,{\bf z} _t):=\mathcal{E}(\mathcal{J}(\rho_{t-1},{\bf z}_t)),
\end{equation}
where $\mathcal{E}: \mathcal{S}(\mathcal{H})\rightarrow \mathcal{S}(\mathcal{H})$ is a strictly contractive map and $\mathcal{J}:\mathcal{S}(\mathcal{H}) \times D_n\rightarrow \mathcal{S}(\mathcal{H})$ is a CPTP map that encodes the input information. Note that choosing the contraction constant of $\mathcal{E}$ small enough is a way to typically ensure, using Proposition 3 in Ref.~\cite{martinez2023quantum},  that $T$ has the ESP. This rather broad family is very present in the QRC literature: as a composition of a dissipative channel with an amplitude encoding map \cite{fujii2017harnessing,chen2019learning,gies2024exploring,kora2024frequency,palacios2024role,mujal2023time,franceschetto2024harnessing}; as a quantum circuit with noise \cite{suzuki2022natural,kubota2023temporal,monzani2024leveraging}, mid-circuit measurements \cite{yasuda2023quantum,fuchs2024quantum,hu2024overcoming} or a reset-rate channel \cite{chen2020temporal,chen2021nonlinear,molteni2023optimization,mifune2024effects}.  Analytical results have also been derived for the particular case of $\mathcal{J}$ being unitary \cite{kobayashi2024coherence}.

The structure \eqref{eq:rho} of contracted-encoding channels allows us to be more specific on the criteria that we have spelled out so far regarding input dependence. 

\medskip

\subsubsection{\bf Input-independent contracted-encoding channels} 
We start with the criterion about constant filters that we introduced in Theorem \ref{th:1} and that characterizes input-independent architectures that should be avoided.

\begin{theorem} 
\label{th:2}
Let $T: \mathcal{B}(\mathcal{H}) \times D_n\rightarrow \mathcal{B}(\mathcal{H})$ be a contracted-encoding system as in~\eqref{eq:rho} and  suppose that the input space $D_n$ is compact.   Let $\rho^*_{\mathcal{E}} \in  \mathcal{S}(\mathcal{H})$ be the unique fixed point of the strictly contractive quantum channel  $\mathcal{E}$. Then:
\begin{enumerate}[(i)]
\item  If $\rho^*_{\mathcal{E}} $ is also a fixed point of the CPTP map $\mathcal{J}$,that is, $\rho^*_{\mathcal{E}}=\mathcal{J}(\rho^*_{\mathcal{E}},{\bf v})$ for all ${\bf v}\in D_n$, then the corresponding filter $U_T $ is constant. More specifically $U_T({\bf z}) _t=\rho^*_{\mathcal{E}}$ for all  ${\bf z}\in (D_n)^{\mathbb{Z}}$ and $t \in \Bbb Z $. 
\item  Assume the filter $U_T $ is constant, that is $U_T({\bf z}) _t=\rho^*_T$ for all  ${\bf z}\in (D_n)^{\mathbb{Z}}$, $t \in \Bbb Z $. Assume, additionally, that $\mathcal{E}$ has an inverse map $\mathcal{E}^{-1}: \mathcal{B}(\mathcal{H})\rightarrow \mathcal{B}(\mathcal{H})$. The following conditions are equivalent:
\begin{enumerate}[1.]
\item $\rho^*_{\mathcal{E}}=\rho^*_T$,
\item $\rho':=\mathcal{J}(\rho^*_T,{\bf v})=\rho^*_T$, and hence $\rho'$ does not depend on ${\bf v}\in D_n$.
\item $\rho^*_{\mathcal{E}}=\rho'$,
\end{enumerate}
in which case, $\rho^*_{\mathcal{E}}=\rho'=\rho^*_T$. If, additionally, $\mathcal{J}$ is a unitary map such that $\mathcal{J}(\rho,{\bf v})=\mathcal{U}({\bf v})\rho\mathcal{U}^{\dagger}({\bf v})$, then these statements are also equivalent to 
\begin{enumerate}[4.]
\item $[\rho',\mathcal{U}({\bf v})]=[\rho^*_T,\mathcal{U}({\bf v})]=0$.
\end{enumerate}
\end{enumerate}
\end{theorem}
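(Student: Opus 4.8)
The plan is to handle parts (i) and (ii) separately, reducing everything to Theorem~\ref{th:1} (equivalently Remark~\ref{remark:1}) together with elementary manipulations of the composition structure $T_{\bf v}=\mathcal{E}\circ\mathcal{J}_{\bf v}$, where $\mathcal{J}_{\bf v}:=\mathcal{J}(\cdot,{\bf v})$.

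For \textbf{part (i)}, I would first record that $T$ inherits strict contractivity from $\mathcal{E}$: each $\mathcal{J}(\cdot,{\bf v})$ is CPTP, hence nonexpansive in the trace norm, while $\mathcal{E}$ is strictly contractive with some constant $r<1$, so $\|T(\rho_1,{\bf v})-T(\rho_2,{\bf v})\|_1\le r\,\|\rho_1-\rho_2\|_1$ uniformly in ${\bf v}\in D_n$. Since $\mathcal{S}(\mathcal{H})$ is a closed (in fact compact) subset of the Banach space $(\mathcal{B}(\mathcal{H}),\|\cdot\|_1)$, this puts us in the setting of Remark~\ref{remark:1}, so $T$ has the ESP and a well-defined filter. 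The hypothesis $\rho^*_{\mathcal{E}}=\mathcal{J}(\rho^*_{\mathcal{E}},{\bf v})$ for all ${\bf v}$ gives $T(\rho^*_{\mathcal{E}},{\bf v})=\mathcal{E}(\mathcal{J}(\rho^*_{\mathcal{E}},{\bf v}))=\mathcal{E}(\rho^*_{\mathcal{E}})=\rho^*_{\mathcal{E}}$, so the fixed-point function of $T$ is constant and equal to $\rho^*_{\mathcal{E}}$; Remark~\ref{remark:1} then forces $U_T(\underline{{\bf z}})_t=\rho^*_{\mathcal{E}}$ for all $\underline{{\bf z}}\in(D_n)^{\mathbb{Z}}$ and $t\in\mathbb{Z}$.

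For \textbf{part (ii)}, the key observation is that invertibility of $\mathcal{E}$ removes the ${\bf v}$-dependence of $\mathcal{J}(\rho^*_T,{\bf v})$: constancy of $U_T$ means $\rho^*_T$ is the common fixed point of all the $T_{\bf v}$, i.e.\ $\mathcal{E}(\mathcal{J}(\rho^*_T,{\bf v}))=\rho^*_T$ for every ${\bf v}$, and applying $\mathcal{E}^{-1}$ yields $\mathcal{J}(\rho^*_T,{\bf v})=\mathcal{E}^{-1}(\rho^*_T)$, which is independent of ${\bf v}$ and is exactly the quantity $\rho'$. Thus $\rho'$ is well defined and satisfies $\mathcal{E}(\rho')=\rho^*_T$. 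The equivalences are now short: for (1)$\Leftrightarrow$(3), apply injectivity of $\mathcal{E}$ to the identities $\mathcal{E}(\rho')=\rho^*_T$ and $\mathcal{E}(\rho^*_{\mathcal{E}})=\rho^*_{\mathcal{E}}$; for (1)$\Leftrightarrow$(2), note that $\rho'=\rho^*_T$ is precisely the statement that $\rho^*_T$ is a fixed point of $\mathcal{E}$, which by uniqueness of the fixed point of the strictly contractive $\mathcal{E}$ is equivalent to $\rho^*_T=\rho^*_{\mathcal{E}}$. In the unitary case $\mathcal{J}(\rho,{\bf v})=\mathcal{U}({\bf v})\rho\,\mathcal{U}^{\dagger}({\bf v})$, condition 2 reads $\mathcal{U}({\bf v})\rho^*_T\mathcal{U}^{\dagger}({\bf v})=\rho^*_T$, i.e.\ $[\rho^*_T,\mathcal{U}({\bf v})]=0$; and once this holds $\rho'=\rho^*_T$, so $[\rho',\mathcal{U}({\bf v})]=0$ as well, giving 2$\Leftrightarrow$4. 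The displayed conclusion $\rho^*_{\mathcal{E}}=\rho'=\rho^*_T$ is then immediate once any one of the conditions is assumed.

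The only genuinely delicate point is the passage, in part (i), from the trace-norm contractivity of $T$ on $\mathcal{S}(\mathcal{H})$ to the operator-norm contraction on $\mathcal{B}_0(\mathcal{H})$ that appears in the precise hypotheses of Theorem~\ref{th:1}; I would deal with it by invoking Remark~\ref{remark:1} directly (it is stated for a contraction on a closed subset of a Banach space, which is exactly the present situation) or, alternatively, by citing Proposition~3 of Ref.~\cite{martinez2023quantum}, which guarantees the equivalence of the two contractivity formulations for compact $D_n$. Everything else reduces to bookkeeping with the relation $\mathcal{E}(\rho')=\rho^*_T$ and the uniqueness of fixed points of contractions.
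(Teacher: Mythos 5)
Your proof is correct and follows essentially the same route as the paper's: part (i) reduces to Theorem~\ref{th:1} (via Remark~\ref{remark:1}) after checking that $\rho^*_{\mathcal{E}}$ is the common fixed point of all the maps $T(\cdot,{\bf v})=\mathcal{E}\circ\mathcal{J}(\cdot,{\bf v})$, and part (ii) rests on applying $\mathcal{E}^{-1}$ to the identity $\mathcal{E}(\rho'({\bf v}))=\rho^*_T$ to conclude that $\rho'$ is input-independent, with the three equivalences (and the unitary case) then following from injectivity of $\mathcal{E}$ and uniqueness of its fixed point, exactly as in the paper. Your explicit verification that $T(\cdot,{\bf v})$ is a uniform trace-norm contraction on $\mathcal{S}(\mathcal{H})$, so that Remark~\ref{remark:1} genuinely applies, is a small point of added care that the paper leaves implicit.
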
 

\begin{proof}  
\textit{ (i)} Following Theorem \ref{th:1}, the filter $U _T $ is constant if and only if the contractive maps $T(\cdot , {\bf z})$ have a unique and input-independent fixed point. Then, if $\rho^*_{\mathcal{E}}$ is the only fixed point of $\mathcal{E}$ and one of the fixed points of $\mathcal{J}$, it becomes the only fixed point of the total map and the filter is hence necessarily constant. 

\noindent \textit{ (ii)} If the filter $U_T$ is constant then, by Theorem \ref{th:1}, $\rho^*_T$ satisfies that $T(\rho^*_T,{\bf v})=\rho^*_T$, for all ${\bf v} \in D_n$.
We now show that the image of the fixed point $\rho^*_T$ by the maps $\mathcal{J}(\cdot , {\bf v})$ yields an input-independent density matrix, and we will subsequently show the equivalence between the different statements. We start by assuming that $\mathcal{J}(\rho^*_T,{\bf v})=\rho'({\bf v})$, where  $\rho'({\bf v})$ is in principle input dependent. Since $\rho^*_T$ is the fixed point of $T$, the following equation must hold true for all ${\bf s}, {\bf v}\in D_n$:
\begin{equation}
\mathcal{E}(\rho'({\bf s}))=\rho^*_T=\mathcal{E}(\rho'({\bf v})).
\end{equation}
Using the existence of the inverse map of $ \mathcal{E}$, we have that $\rho'({\bf s})=\rho'({\bf v})$ for all ${\bf z}, {\bf v}\in D_n$. Therefore, $\rho'$ must be input independent. We now prove the equivalence between the statements. 

\noindent $\textit{1.} \leftrightarrow \textit{2.}$  If $\rho^*_{\mathcal{E}}=\rho^*_T$, then we have that $\rho^*_T=\mathcal{E}(\rho^*)$ is the only fixed point of  $\mathcal{E}$. We also have that $\rho^*_T=\mathcal{E}(\rho')$, so $\mathcal{E}(\rho^*_T)=\mathcal{E}(\rho')$. Assuming the existence of the inverse of $\mathcal{E}$, the equality $\rho'=\rho^*_T$ holds, and therefore $\rho'=\rho^*_T=\rho^*_{\mathcal{E}}$. If we start by assuming that $\rho'=\rho^*_T$, then $\rho^*_T=\mathcal{E}(\rho')=\mathcal{E}(\rho^*_T)$. Since the map $\mathcal{E}$ only has one fixed point, we conclude that $\rho^*_{\mathcal{E}}=\rho^*_T$, and again $\rho^*_{\mathcal{E}}=\rho^*_T=\rho'$. 

\medskip

\noindent $\textit{2.}\leftrightarrow \textit{3.}$ If $\rho'=\rho^*_T$, then, as we just saw, $\rho^*_{\mathcal{E}}=\rho^*_T=\rho'$. If $\rho^*_{\mathcal{E}}=\rho'$, then $\mathcal{E}(\rho')=\rho'$, so $\rho^*_T=\mathcal{E}(\mathcal{J}(\rho^*_T,{\bf v}))=\mathcal{E}(\rho')=\rho'$. 

\medskip

\noindent $\textit{2.}\leftrightarrow \textit{4.}$ First, notice that $[\rho',\mathcal{U}({\bf v})]=\mathcal{U}({\bf v})[\rho^*_T,\mathcal{U}({\bf  v})]\mathcal{U}^\dagger({\bf v})$, so $[\rho',\mathcal{U}({\bf v})]=0$ if and only if $[\rho^*_T,\mathcal{U}({\bf v})]=0$. If $[\rho',\mathcal{U}({\bf v})]=[\rho^*_T,\mathcal{U}({\bf v})]=0$, then we see that $\rho'=\mathcal{U}({\bf v})\rho^*_T\mathcal{U}^\dagger({\bf v})=\mathcal{U}({\bf v})\mathcal{U}^\dagger({\bf v}) \rho^*_T=\rho^*_T$. Conversely, if $\rho'=\rho^*_T$, then $\rho^*_T=\mathcal{U}({\bf v})\rho^*_T \mathcal{U}^\dagger({\bf v})\rightarrow \rho^*_T \mathcal{U}({\bf v})=\mathcal{U}({\bf v})\rho^*_T\rightarrow [\rho^*_T,\mathcal{U}({\bf v})]=0$.
\end{proof}

\begin{remark}\normalfont
Even though the hypothesis on the existence of the inverse of $\mathcal{E}$ may seem restrictive, it is actually met by several commonly used strictly contractive quantum channels, like the depolarizing and the amplitude-damping channels. We emphasize that the inverse of a quantum channel does not need to be a quantum channel. It will be the case when it describes a unitary evolution (see, for instance, Theorem 3.4.1 in Ref.~\cite{rivas2012open}).
\end{remark}

Theorem \ref{th:2} \textit{ (i)} describes a sufficient condition for having a constant filter that does not require the existence of an inverse for $\mathcal{E}$. In that case we can also deduce that $\rho'=\mathcal{J}(\rho^*_T,{\bf v})=\mathcal{J}(\rho^*_{\mathcal{E}},{\bf v})=\rho^*_{\mathcal{E}}=\rho^*_T$, for all $\mathbf{v} \in D _n$. However, we emphasize that it is not necessary to have $\rho^*_{\mathcal{E}}=\mathcal{J}(\rho^*_{\mathcal{E}},{\bf v})$ to obtain a constant filter, as we will see in the examples below. Part \textit{(ii)} addresses the reverse implication by describing the underlying connections between the CPTP map $\mathcal{J}$ and the dissipative map $\mathcal{E}$ when we know that the filter is constant and, additionally, any of the statements is fulfilled. In that particular case, we have shown that $\rho'=\rho^*_\mathcal{E}=\rho^*_T$. However, the following statement is false: if the filter $U_T=\rho^*_T$ is constant, that is, $U _T(\underline{{\bf z}})_t=\rho^*_T$ for all $\underline{{\bf z}}\in (D_n)^{\mathbb{Z}}$ and $t \in \Bbb Z$, then $\rho'=\rho^*_\mathcal{E}=\rho^*_T$.  Counterexamples can be constructed where the filter is constant and $\mathcal{E}$ has an inverse, but $\rho'\neq \rho^*_\mathcal{E}\neq \rho^*_T$.

\begin{example} \label{ex:AD}
\normalfont
Let us consider a quantum channel defined using unitary maps $\mathcal{U}({\bf z}_t)$ in the construction of $\mathcal{J}$:
\begin{equation}\label{eq:rho_u}
\rho_t=T(\rho _{t-1},{\bf z} _t)=\mathcal{E}(\mathcal{U}({\bf z}_t)\rho_{t-1}\mathcal{U}^{\dagger}({\bf z}_t)).
\end{equation}
We now consider the case of a single qubit and a unidimensional input series. Let the unitary map be defined by
\begin{equation*}
\mathcal{U}(z_t)=e^{-i\frac{z_t}{2}\sigma_z}H,
\end{equation*} where $H$ is the Hadamard gate, $\sigma_z$ is the Pauli matrix in the $z$ direction and $e^{-i\frac{z_t}{2}\sigma_z}$ is a rotation around the $z$ axis with angle $z_t$. Consider the strictly contractive channel 
\begin{equation*}
\mathcal{E}(\rho)=e^{-i\frac{\theta}{2}\sigma_y}\mathcal{E}_{\text{AD}}(H\rho H)e^{i\frac{\theta}{2}\sigma_y},
\end{equation*}
where $\theta\in (0,\pi/2]$, is the angle of the rotation gate in the $y$ axis and $\mathcal{E}_{\text{AD}}$ is the amplitude damping channel, 
\begin{equation}
\mathcal{E}_{\text{AD}}(\rho) = 
\begin{pmatrix} (1-\lambda)\rho_{00} &\sqrt{1-\lambda} \rho_{01}\\
\sqrt{1-\lambda} \rho_{10} & \lambda\rho_{00}+\rho_{11}
\end{pmatrix},
\end{equation}
with $0\leq \lambda \leq 1$. Let us fix its rate to $\lambda = 1-\cos^2(\theta)$. The inverse map $\mathcal{E}^{-1}$ is defined as
\begin{equation*}
\mathcal{E}^{-1}(\rho)=H\mathcal{E}^{-1}_{\text{AD}}(e^{i\frac{\theta}{2}\sigma_y}\rho e^{-i\frac{\theta}{2}\sigma_y})H,
\end{equation*}
where $\mathcal{E}^{-1}_{\text{AD}}$ would be a quantum channel if and only if $\lambda=0$ (i.e. when it describes a unitary evolution). Under all the previous conditions, one can show that the filter $U_T$ is constant, with density matrices
\begin{equation*}\begin{split}
& \rho^*_T= \frac{1}{2}\begin{pmatrix} 1 & \sin(\theta)\\
\sin(\theta) & 1
\end{pmatrix}, \quad \rho'= \frac{1}{2}\begin{pmatrix} 1 +\sin(\theta)&0\\
0 & 1-\sin(\theta)
\end{pmatrix},\\ 
&\rho^*_{\mathcal{E}}= \frac{1}{2}\begin{pmatrix} 1+\frac{1+2\cos(\theta)+\cos(2\theta)}{f(\theta)} &1+\frac{2\sin(\theta)-2}{f(\theta)}\\
1+\frac{2\sin(\theta)-2}{f(\theta)} & 1-\frac{1+2\cos(\theta)+\cos(2\theta)}{f(\theta)}
\end{pmatrix},
\end{split}
\end{equation*}
where $f(\theta) =3+2\cos(\theta)+\cos(2\theta)+\sin(2\theta)$, $\rho^*_T$, $\rho^*_{\mathcal{E}}$ and $\rho'$ are input independent.
\end{example}

\begin{example}\label{ex:reset}
\normalfont
The {\it reset-rate quantum reservoir map} is
\begin{equation}\label{eq:rho_reset}
\rho_t=\mathcal{E}_{\epsilon}(\mathcal{J}(\rho _t,{\bf z} _t))=(1-\epsilon)\mathcal{J}(\rho_{t-1},{\bf z}_t)+\epsilon\sigma,
\end{equation}
where $\mathcal{J}(\rho_{t-1},{\bf z}_t)$ is a CPTP map, $\sigma$ an arbitrary density matrix, and $0\leq \epsilon<1$. It defines a unique filter 
\begin{equation}\label{eq:filter_reset}
U(\underline{{\bf z}})_t=\epsilon\sum^{\infty}_{j=0}(1-\epsilon)^j\overleftarrow{\prod}^{0}_{k=j-1} \mathcal{J}(\sigma,{\bf z}_{t-k}),
\end{equation}
where 
\begin{equation}
\overleftarrow{\prod}^{0}_{k=j-1}\mathcal{J}(\sigma,{\bf z}_{t-k}):=\mathcal{J}(\cdot,{\bf z}_{t})\circ \cdots \circ \mathcal{J}(\sigma,{\bf z}_{t-j+1})
\end{equation} and
\begin{equation}\label{eq:j=0}
\overleftarrow{\prod}^{0}_{k=-1} \mathcal{J}(\sigma,{\bf z}_{t-k}):=\sigma.
\end{equation}
The fixed point of $\mathcal{E}_{\epsilon}$ is $\sigma$, i.e. $\mathcal{E}_{\epsilon}(\sigma)=\sigma$, and there exist an inverse map $\mathcal{E}_{\epsilon}^{-1}: \mathcal{B}(\mathcal{H})\rightarrow \mathcal{B}(\mathcal{H})$ for $0\leq \epsilon < 1$ such that $\mathcal{E}_{\epsilon}^{-1}\circ \mathcal{E}_{\epsilon}=\mathcal{E}_{\epsilon}\circ \mathcal{E}_{\epsilon}^{-1}=\mathcal{I}$, given by 
\begin{equation}
\mathcal{E}_{\epsilon}^{-1}(\rho) = \frac{1}{1-\epsilon}\left(\rho-\epsilon \sigma\right).
\end{equation}
Notice that this map is not positive semidefinite in general. Take for example a single qubit, with $\sigma = \ket{0}\bra{0}$ and input state $\rho=\ket{1}\bra{1}$. The eigenvalues of the matrix $\mathcal{E}_{\epsilon}^{-1}(\rho)$ are $\lambda_1=1/(1-\epsilon)$ and $\lambda_2=-\epsilon/(1-\epsilon)$, so $\mathcal{E}_{\epsilon}^{-1}(\rho)$ is not a positive-semidefinite matrix.

We now define a specific map $\mathcal{J}$ for a single qubit with a unidimensional input $v\in D_1$:
\begin{equation}
\mathcal{J}(\rho,v)=\mathcal{E}_{\text{d}}(\mathcal{U}(v)\rho\mathcal{U}^{\dagger}(v)),
\end{equation}
where 
\begin{equation}
\mathcal{E}_{\text{d}}(\rho) = 
\begin{pmatrix} \rho_{00} & \lambda \rho_{01}\\
\lambda \rho_{10} & \rho_{11}
\end{pmatrix}
\end{equation}
is the dephasing channel ($0\leq \lambda \leq 1$) and 
\begin{equation}
\mathcal{U}(v) = e^{-i\frac{v}{2}\sigma_x} 
\end{equation}
is a rotation on the $x$ axis. Notice that the reservoir map could be also written as $T(\rho,v)=\tilde{\mathcal{E}}(\mathcal{U}(v)\rho\mathcal{U}^{\dagger}(v))$, where $\tilde{\mathcal{E}} = \mathcal{E}_{\epsilon}\circ \mathcal{E}_{\text{d}}$.
We now set the parameters $\epsilon=\lambda=1/2$ and $\sigma = \begin{pmatrix} 1/2 & 1/2\\
1/2 & 1/2
\end{pmatrix}$. Under these very specific conditions, one can show that the filter $U_T$ is constant, with density matrices
\begin{equation*}\begin{split}
& \rho^*= \begin{pmatrix} 1/2 &  1/3\\
1/3 &   1/2
\end{pmatrix}, \quad \rho'= \begin{pmatrix}   1/2& 1/6\\
1/6 &  1/2
\end{pmatrix}, \\ 
&\rho^*_{\mathcal{E}_\epsilon}= \sigma =\frac{1}{2}\begin{pmatrix}  1& 1\\
1 &  1
\end{pmatrix},
\end{split}
\end{equation*}
where $\rho^*$, $\rho^*_{\mathcal{E}_\epsilon}$, and $\rho'$ are obviously input independent.
\end{example}

\subsubsection{\bf Filter injectivity with contracted-encoding channels}
We now focus on the filter invertibility of the contracted-encoding channels introduced in~\eqref{eq:rho}. Note first that the SAS representation of these channels has the form 
\begin{equation}
\textbf{x}_t = p(\textbf{z}_t)\textbf{x}_{t-1}+q(\textbf{z}_t) = p_\mathcal{E}p_\mathcal{J}(\textbf{z}_t) \textbf{x}_{t-1}+p_\mathcal{E}q_\mathcal{J}(\textbf{z}_t) + q_\mathcal{E},
\end{equation}
where $p_\mathcal{E}, p_\mathcal{J}(\textbf{z}_t): \mathbb{R}^N \rightarrow \mathbb{R}^N $ are linear and $q_\mathcal{E}, q_\mathcal{J}(\textbf{z}_t)\in  \mathbb{R}^N$.
Proposition \ref{prop:1} guarantees the invertibility of the corresponding filters whenever the linear maps
\begin{equation}
\label{maps for rank}
p_\mathcal{E}\Bigl(\bigl(\mathcal{D}p_\mathcal{J}(\textbf{z})(\cdot)\bigr) \textbf{x}+\mathcal{D}q_\mathcal{J}(\textbf{z}) \Bigr):\mathbb{R}^n\rightarrow\mathbb{R}^N,
\end{equation}
have rank $n$ for all {${\bf z} \in D_n$ and $\mathbf{x} \in V_R$}. 

There are two particular cases in which the filter invertibility condition given by \eqref{maps for rank} can be simplified. First, if the strictly contractive channel $\mathcal{E}$ is invertible, then it suffices to impose the maximal rank condition on the maps $\mathcal{D}p_\mathcal{J}(\textbf{z})(\cdot) \textbf{x}+\mathcal{D}q_\mathcal{J}(\textbf{z}):\mathbb{R}^n\rightarrow\mathbb{R}^N$. 

Second, it is also worth considering the case in which the contracted-encoding channel has a {unital (or unitary)} encoding map $\mathcal{J}$.
In this case, we find that $q_\mathcal{J}(\textbf{z}) = 0$, and hence the rank condition is further simplified and concerns the linear maps:
\begin{equation}\label{eq:rankJunitary}
p_\mathcal{E}\left(\mathcal{D}p_\mathcal{J}(\textbf{z})(\cdot) \textbf{x}\right):\mathbb{R}^n\rightarrow \mathbb{R}^N.
\end{equation}
As before, an invertible $\mathcal{E}$ would reduce the problem to evaluating the rank of the maps $\left(\mathcal{D}p_\mathcal{J}(\textbf{z})(\cdot)\right) \textbf{x}:\mathbb{R}^n\rightarrow \mathbb{R}^N$. Notice that $\textbf{x}=\textbf{0}$ must be avoided as, in that case, the rank is trivially zero.

If we focus exclusively on the local filter invertibility for constant output sequences, Proposition \ref{prop:2} and Corollary \ref{cor:1} apply. In particular, for Corollary \ref{cor:1} we can further simplify the rank condition by requiring it on the maps
\begin{equation}
p_\mathcal{E}\left(\mathcal{D}p_\mathcal{J}(\textbf{z})(\cdot)\right) (I-p_\mathcal{E}p_\mathcal{J}(\textbf{z}))^{-1}q_\mathcal{E}:\mathbb{R}^n\rightarrow \mathbb{R}^N.
\end{equation}

\begin{example}
\normalfont
We illustrate the content of Proposition \ref{prop:1} in the contracted-encoding case with a simple QRC model with a unital channel $\mathcal{J}$ that uses again the reset-rate channel as strictly contractive map $\mathcal{E}$. To compute its matrix representation, we must extend it to be trace-preserving for all operators in $\mathcal{B}(\mathcal{H})$. Let us define the CPTP map $\mathcal{E}': \mathcal{B}({\mathcal H}) \longrightarrow \mathcal{B}({\mathcal H}) $
\begin{equation} \label{eq:dep_B}
\mathcal{E}'(A)=(1-\epsilon)A+\epsilon\tr(A)\sigma,
\end{equation}
where we will study the case of a single qubit, and we fix $\sigma = \ket{0}\bra{0}$. The SAS representation associated to this map is  
\begin{equation}\label{eq:SASepsilon}
q_{\mathcal{E}'}=\begin{pmatrix}  0 \\
0 \\
\epsilon
\end{pmatrix},\quad p_{\mathcal{E}'}=\begin{pmatrix}  1-\epsilon& 0 & 0\\
0 &  1-\epsilon & 0 \\
0 &  0 & 1-\epsilon    
\end{pmatrix}.
\end{equation} 
We  take the depolarizing channel with unidimensional input $z$ as the unital channel $\mathcal{J}: \mathcal{B}({\mathcal H}) \longrightarrow \mathcal{B}({\mathcal H})$, that is
\begin{equation}
\mathcal{J}(A,z)=zA+(1-z)\tr(A)\frac{I}{d},
\end{equation}
which is again a reset-rate channel with $\sigma = I/d$ for $z \in [0,1]$. The SAS representation of this quantum channel is the following:
\begin{equation}
q_{\mathcal{J}}(z)=\textbf{0},\quad p_{\mathcal{J}}(z)=\begin{pmatrix}  z& 0 & 0\\
0 &  z & 0 \\
0 &  0 & z    
\end{pmatrix}.
\end{equation}
The linear map~\eqref{eq:rankJunitary} is, in this case, given by
\begin{equation}
p_{\mathcal{E}'}\left(\mathcal{D}p_\mathcal{J}({z})(s) \textbf{x}\right) 
=  (1-\epsilon){\bf x}s,
\end{equation}
for any $s\in \mathbb{R}$ and any $\mathbf{x}=(x _1, x _2, x _3)^{\top}  \in V_R$. The rank of this linear map is zero if and only if ${\bf x}=0$. The expression \eqref{eq:filter_x} for the SAS filter takes the form
\begin{equation*}
U(\underline{{\bf z}})_t= {\epsilon}\sum_{j=0}^{\infty} (1- \epsilon)^j\prod^{j-1}_{k=0}z_{t-k}  \left(0,0,1\right)^{\top}, \ t \in \Bbb Z_-.
\end{equation*}
The series contains only positive terms (even when we use the zero sequence, where $U(\underline{{\bf 0}})_t=\left(0,0,\epsilon\right)^{\top}$), so no input sequence could make the filter zero. Following Proposition \ref{prop:1}, we have proved that $U(\underline{{\bf z}})$ is injective for $z\in [0,1]$.
\end{example}

\begin{example}\label{ex:period}
\normalfont
We finish by illustrating the content of Proposition \ref{prop:2} when Proposition \ref{prop:1} cannot be explicitly shown. Let us define again the CPTP map $\mathcal{E}': \mathcal{B}({\mathcal H}) \longrightarrow \mathcal{B}({\mathcal H}) $
\begin{equation} 
\mathcal{E}'(A)=(1-\epsilon)A+\epsilon\tr(A)\sigma.
\end{equation}
where we will study the case of a single qubit, we fix $\sigma = \ket{0}\bra{0}$ and whose SAS representation was given in \eqref{eq:SASepsilon}.
Now we take a rotation around the $y$ axis with unidimensional input $z$ as a unitary channel $\mathcal{J}$, that is
\begin{equation}
\mathcal{J}(\rho,z)=\mathcal{U}(z)\rho\mathcal{U}^{\dagger}(z) =e^{-i\frac{z}{2}\sigma_y}\rho e^{i\frac{z}{2}\sigma_y}.
\end{equation}
The SAS representation of this quantum channel is the following:
\begin{equation}\label{eq:qpJ}
q_{\mathcal{J}}(z)=\textbf{0},\quad p_{\mathcal{J}}(z)=\begin{pmatrix}  \cos(z)& 0 & -\sin(z)\\
0 &  1 & 0 \\
\sin(z) &  0 & \cos(z)    
\end{pmatrix}.
\end{equation} 
The linear map~\eqref{eq:rankJunitary} is given by
\begin{equation}
\label{for maximal rank}
p_{\mathcal{E}'}\left(\mathcal{D}p_\mathcal{J}({z})(s) \textbf{x}\right) 
= - (1-\epsilon)\begin{pmatrix}  x _1\sin(z)+ x _3\cos(z)\\
0 \\
- x _1\cos(z)+ x _3 \sin (z)    
\end{pmatrix}s,
\end{equation}
for any $s\in \mathbb{R}$ and any $\mathbf{x}=(x _1, x _2, x _3)^{\top}  \in V$. This implies that the map $ p_{\mathcal{E}'}\left(\mathcal{D}p_\mathcal{J}({z})(\cdot )\textbf{x}\right): \mathbb{R}\rightarrow {\Bbb R}^3 $ has maximal rank equal to one as long as  $x _1\sin(z)+ x _3\cos(z) $ and $- x _1\cos(z)+ x _3 \sin (z) $ are not simultaneously equal to zero, which, as we now see is only possible when $\mathbf{x}= {\bf 0} $. Indeed, if $x _1\sin(z)+ x _3\cos(z) =0$, there exists $\alpha \in \mathbb{R} $ such that $x _1= \alpha\cos (z) $ and $x _2= -\alpha\sin(z) $. If now $- x _1\cos(z)+ x _3 \sin (z) =0$, then this implies, together with the previous relation, that $\alpha(\sin ^2(z)+\cos ^2(z))=0 $ and hence that $\alpha=0 $, which can only happen if $\mathbf{x}= {\bf 0}$. 

 This result, together with Proposition \ref{prop:2} can be used to show that the filter corresponding to this contracted-encoding channel is locally injective for inputs in the neighborhood of many input sequences. We show this by first spelling out this filter. Denote by $R _\theta ^Y: \mathbb{R}^3 \rightarrow {\Bbb R}^3 $ the rotation in three dimensions by an angle $\theta $ around the OY-axis. It is easy to see that in our case, the expression \eqref{eq:filter_x} for the SAS filter takes the form
\begin{equation*}
U(\underline{{\bf z}})_t= {\epsilon}\sum_{j=0}^{\infty} (1- \epsilon)^jR _{\sum_{k=0}^{j-1}z_{t-k}} ^Y \left(0,0,1\right)^{\top}, \ t \in \Bbb Z.
\end{equation*}
Notice that this filter is not globally injective as adding $2\pi$ to each entry $z_t$ of the input sequence produces a different input with the same image under $U$, and additionally, showing $U(\underline{{\bf z}})_0\neq \textbf{0}$ for a given $\epsilon$ seems an unfeasible task, discarding the application of Proposition \ref{prop:1} even for cases where the input does not lead to periodicity, like $D_n \subset [0,2\pi)$. However, as we now prove, $U$ is locally injective for inputs in a neighborhood of certain input sequences. Let us first tackle the case of constant input sequences, and then we will consider nonconstant sequences. Let $z  \in \mathbb{R} $ be arbitrary, let $\underline{{\bf z}}=(\dots,z,z)$ be the corresponding constant sequence, and $\textbf{x}_0=U(\underline{{\bf z}}) _0 $. Then, the linear map \eqref{eq:rankJunitary} becomes
\begin{equation}\label{eq:rank_last_ex}
\begin{split}
&p_{\mathcal{E}'}\left(\mathcal{D}p_\mathcal{J}({z})(s) \textbf{x}_0\right) 
 \\
&=-\frac{\epsilon(1-\epsilon)}{1+(1-\epsilon)^2-2(1-\epsilon)\cos(z)}\begin{pmatrix}  \cos(z)\\
0 \\
\sin(z)    
\end{pmatrix}s,
\end{split}
\end{equation}
for any $s\in \mathbb{R}$. This vector is different from zero whenever $0< \epsilon<1$. Hence, Corollary \ref{cor:1} implies the existence of a neighborhood $A_{\underline{{\bf z}}} $ of the sequence $\underline{{\bf z}} $ such that the restriction $U|_{A_{\underline{{\bf z}}} } $ is injective.

As we just showed, \eqref{eq:rank_last_ex} has maximal rank for any $z  \in \mathbb{R} $ whenever $0< \epsilon<1$, so this is trivially true for any $z_0\in\left(\mathbf{x} ^\ast\right)^{-1}(\mathbf{x} _0)$, where we take $\mathbf{x} _0 =U(\underline{{\bf z}}_0) _0$ with $\underline{{\bf z}}_0=(\dots,z_0,z_0)$ for simplicity. Now consider, for instance, the input space $D_n =[0,2\pi]$ and $z_0=0$. Due to the periodicity of the filter $U$ under each entry $z_t\in D_n$, we find that $\left(\mathbf{x} ^\ast\right)^{-1}(\mathbf{x} _0) =\{0,2\pi\}$. Therefore, Proposition \ref{prop:2} guarantees that for each input sequence $\underline{{\bf z}} _0$ constructed using the two elements $\{0,2\pi \}$ as entries, there exists a product set $A_{\underline{{\bf z}} _0}$ such that the restrictions $U|_{A_{\underline{{\bf z}} _0}} $ is injective. 

We conclude by illustrating the importance of injectivity with a numerical example that shows its effect on this model when solving tasks. We consider the short-term reconstruction task that consists of taking as target $\hat{y}_t := z_{t-1}$, that is, we ask the reservoir to recall the previously one time step-lagged fed inputs. These inputs are drawn from a random uniform distribution in the interval $z_t\in[0,1]$. The output layer $y_t\in \mathbb{R}$ of the quantum reservoir is constructed as 
\begin{equation}
    y_t = {\bf x}_t^\top{\bf w}+b,
\end{equation}
where ${\bf w}\in \mathbb{R}^{3}$ is the output weight vector, and we have added a constant offset $b\in\mathbb{R}$. Given a training set of length $N$, the ridge regularized least square approximation to some target vector $\hat{\bf y}\in\mathbb{R}^{N}$ is given by the expression
\begin{equation}
  \text{arg }\min_{\bf w}\left( \|\textbf{X}{\bf w}-\hat{\bf y}\|^2_2+\lambda\|{\bf w}\|^2_2\right),
\end{equation}
where  $\textbf{X}\in \mathbb{R}^{N\times 3}$ is the matrix of observables at all training time steps, $\lambda>0$ is the regularization hyperparameter, and $\|\cdot\|_2$ is the Euclidean norm. The explicit solution of this optimization problem is given by the expressions:
\begin{equation}
\begin{split}
    &\bar{\bf w}= ({\bf X}^\top{\bf A}_{N}{\bf X}+\lambda {\bf I})^{-1}{\bf X}^\top {\bf A}_{N}\hat{\bf y},\\
    & \bar{b} = \frac{1}{N}{\bf 1}^\top_N(\hat{\bf y}-\textbf{X}\bar{\bf w}),
    \end{split}
\end{equation}
where ${\bf I}$ is the identity, ${\bf 1}_N:= (1,\dots,1)^\top\in \mathbb{R}^N$, and ${\bf A}_{N}:={\bf I}-{\bf 1}_N{\bf 1}^\top_N/N$. To quantify the performance of the reservoir, we use the memory capacity $C:=\text{cov}^2(\hat{\bf y},{\bf y})/(\sigma^2(\hat{\bf y})\sigma^2({\bf y}))$, where $\text{cov}(\cdot)$ is the covariance and $\sigma^2(\cdot)$ is the variance. For this task, we set 100 inputs to washout the initial conditions, $N=1000$ for training, $N_{\text{test}}=1000$ for test and $\lambda=1\times 10^{-10}$ for regularization. We average $C$ over 100 random realizations of the input sequence.

To test the effect of injectivity on this task, we will force the reservoir to encode inputs out of the interval $[0,2\pi)$. Let us parametrize the input encoding channel with an input strength parameter $g$ as
\begin{equation}
\mathcal{J}(\rho,z)=\mathcal{U}(z)\rho\mathcal{U}^{\dagger}(z) =e^{-ig\frac{z}{2}\sigma_y}\rho e^{ig\frac{z}{2}\sigma_y}.
\end{equation}

\begin{figure}[h]
\captionsetup[subfigure]{}
\begin{center}
\includegraphics[scale=0.15]{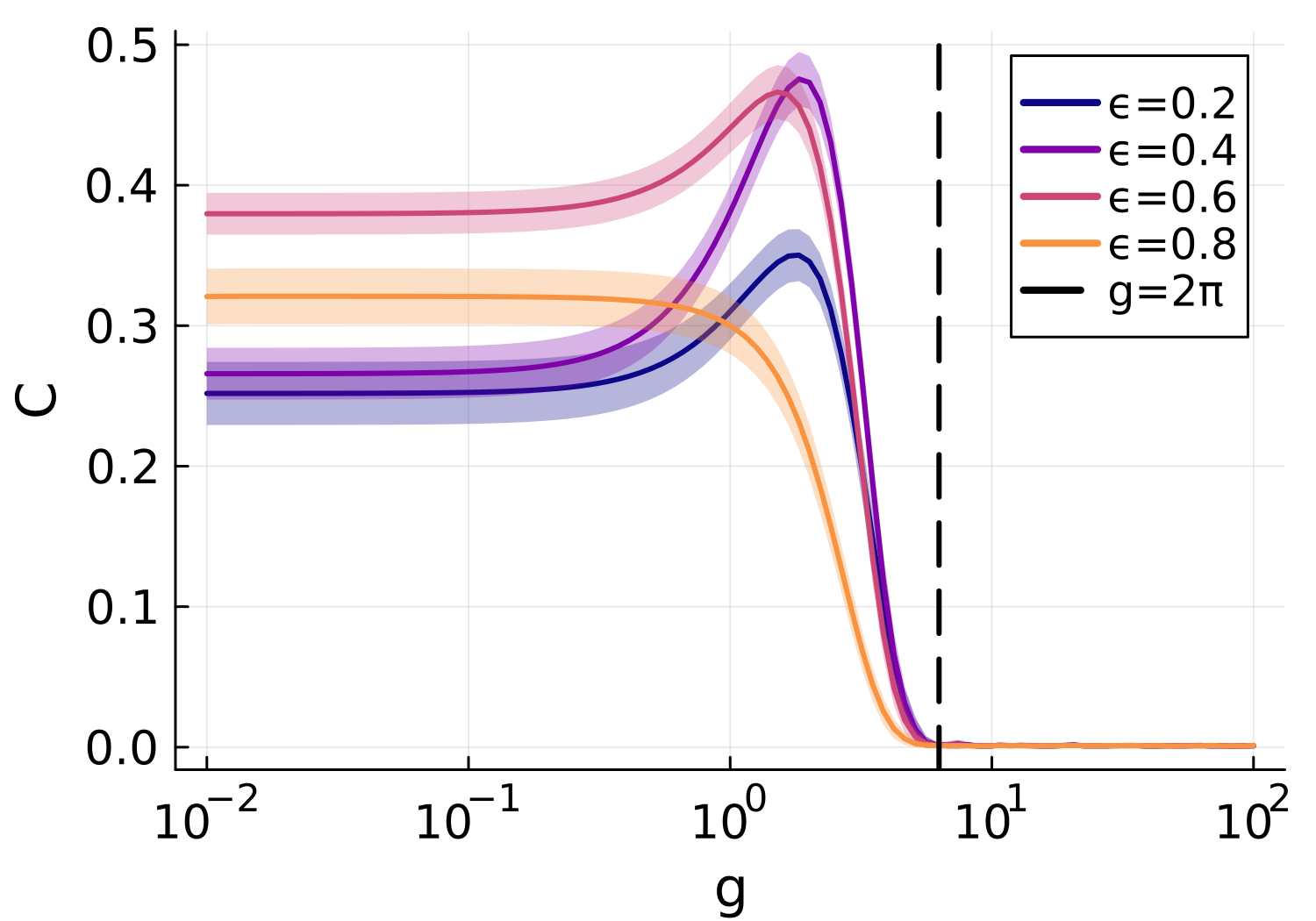}
\caption{Memory capacity for the short-term memory task $\hat{y}_t = z_{t-1}$. Solid lines denote the mean values, and shaded regions represent the standard deviations for 100 realizations of the uniformly distributed random input sequence.}\label{Fig:2}
\end{center}
\end{figure}

Figure \ref{Fig:2} displays the values of $C$ against $g$ for several values of the hyperparameter $\epsilon$. For $g\ll 1$, we observe a uniform performance, but as we increase $g$, the behavior changes. For $g\sim 1$, the performance reaches a maximum (except for $\epsilon=0.8$). After this maximum, the performance drops until we reach  $g=2\pi$, where the reservoir ceases to be functional. This critical point is exactly the value that maps the input out of the interval $[0,2\pi)$, breaking the global injectivity of the model. 

With this numerical experiment, we have shown that global injectivity is a fundamental trait to consider when solving temporal tasks, especially for periodic systems.
\end{example}

\section{Discussion}\label{sec:discussion}
The design of optimal QRC systems is a complex task that requires not only the optimization of hyperparameters but also the choice of a pertinent model right from the outset. Valuable QRC models are obtained when a quantum channel and an input encoding are selected that present adequate information processing properties. In this paper, we have built on our previous results in Ref.~\cite{martinez2023quantum} and further explored conditions that guarantee the presence of valuable input response features in QRC systems. In particular, we have provided conditions under which a QRC filter is injective, that is when it can distinguish its outputs between different input sequences. Injectivity is a fundamental property of any machine learning technique, even though not much attention has been paid to it so far in the context of reservoir computing, where it becomes a key trait in the learning of deterministic dynamical systems (see Remark \ref{remark:8}). Proposition \ref{prop:1} provides an explicit sufficient rank condition for SAS systems to have injective filters. We recall that SAS systems are the cornerstone of QRC models in finite dimensions. The condition in Proposition \ref{prop:1} needs to be tested at any possible input and any possible reachable state by the system, which may be cumbersome in some cases. This difficulty can be eased by formulating a local version of this statement in which the condition has to be verified at only one output value $\mathbf{x} _0$ {that would be obtained by feeding a constant signal consisting of ${\bf z} _0 $-inputs}. This yields an easy-to-verify criterion at the price of guaranteeing injectivity only in a neighborhood of the input sequences that lead to $\mathbf{x} _0$.

In the context of the design constraints that we just explained, we have analyzed an important family of quantum reservoir models constructed as the composition of an input-encoding CPTP map followed by a strictly contractive channel that guarantees the ESP and the FMP. We refer to the elements of this family as contracted-encoding quantum channels. The goal of our analysis is to understand how the elements of this family can be designed so that they yield optimal input dependence properties and, in particular, when the corresponding filters are injective. Theorem \ref{th:2} spells out situations that yield constant filters, and that should be avoided in contracted-encoding quantum architectures. We completed the discussion by providing an extension of Proposition \ref{prop:1} and Corollary \ref{cor:1} to this relevant family of models. The main upshot of this part of the paper is that the design of valuable QRC systems capable of processing arbitrarily long input sequences and differentiating between them requires careful analysis. In particular, both the choice of the quantum channel and the input encoding are influential in the quality of the input response.

To conclude, we are convinced that further numerical experiments could be implemented to spell out the importance of injectivity. Example \ref{ex:period} already shows how breaking injectivity can drop the performance of a reservoir in a very simple task. A more valuable numerical experiment that could be studied is the learning of dynamical systems. However, single-qubit models would not be enough, and large quantum systems would be required, with the subsequent complications of evaluating the rank condition. An approach that can be explored is numerically obtaining the Pauli matrix representation of the quantum channel following Ref.~\cite{hantzko2024pauli} and then using automatic differentiation to compute the Fréchet derivatives of $p({\bf z})$ and $q({\bf z})$. To study the rank condition of Proposition \ref{prop:1} we would need to sample from the set of reachable states $V_R\subset V$, but we can alternatively sample in the larger space $V$ if it is not possible to obtain $V_R$. A simpler situation is given in Proposition \ref{prop:2} and Corollary \ref{cor:1}, because we can explicitly compute ${\bf x}_0$  by feeding a constant input sequence. Further simplifications of the rank condition can be obtained by understanding the physics behind the quantum channel, as demonstrated for the contracted-encoding quantum channels.

\section*{Acknowledgments}
We thank Shumpei Kobayashi for inspiring and useful discussions. Interesting input from two referees is also acknowledged. R.M.P. and J.P.O. acknowledge partial financial support from the School of Physical and Mathematical Sciences of the Nanyang Technological University through the SPMS Collaborative Research Award 2023 entitled ``Quantum Reservoir Systems for Machine Learning." 
RMP acknowledges the QCDI project
funded by the Spanish Government. J.P.O. acknowledges the hospitality of the Donostia International Physics Center and R.M.P. of the Division of Mathematical Sciences of the Nanyang Technological University during the academic visits in which some of this work was developed.

\section*{Data Availability}
The data that support the findings of this article
are openly available \cite{github}. The Mathematica notebook for some of the examples, generating Fig.~\ref{Fig:1} and the Julia code to generate Fig.~\ref{Fig:2} is available at Ref.~\cite{github}.

\end{document}